\newcommand{\matone}{{\mathbb{1}}}
\newtheorem{theorem}{Theorem}
\DeclareMathOperator{\Inv}{Inv}
\DeclareMathOperator{\Ker}{Ker}
\DeclareMathOperator{\Ln}{Ln}
\newcommand{\fmarki}{*}
\newcommand{\fmarkii}{\ensuremath{\dagger}}
\newcommand{\fmarkiii}{\ensuremath{\ddagger}}
\newcommand{\fmarkiv}{\ensuremath{\mathsection}}
\newcommand{\fmarkv}{\ensuremath{\mathparagraph}}
\newcommand{\fmarkvi}{\ensuremath{\|}}
\newcommand{\fmarkvii}{**}
\newcommand{\fmarkviii}{\ensuremath{\dagger\dagger}}
\newcommand{\fmarkix}{\ensuremath{\ddagger\ddagger}}
\def\@fnsymbol#1{{\ifcase#1\or \fmarki\or \fmarkii\or \fmarkiii\or \fmarkiv\or \fmarkv\or \fmarkvi\or \fmarkvii\or \fmarkviii\or \fmarkix \else\@ctrerr\fi}}
\renewcommand{\fmarkix}{\ensuremath\ddag\ddag}
\def\@fnsymbol#1{{\ifcase#1\or \fmarki\or \fmarkii\or \fmarkiii\or \fmarkiv\or \fmarkv\or \fmarkvi\or \fmarkvii\or \fmarkviii\or \fmarkix \else\@ctrerr\fi}}
\begin{document}

\title{Quantum Computation of Thermal Averages for a Non-Abelian $D_4$ Lattice Gauge Theory via Quantum Metropolis Sampling}

\author{Edoardo~Ballini}
\email{edoardo.ballini@unitn.it}
\affiliation{Pitaevskii BEC Center and Department of Physics, University of Trento, Via Sommarive 14, I-38123 Trento, Italy}
\affiliation{INFN-TIFPA, Trento Institute for Fundamental Physics and Applications, Trento, Italy}

\author{Giuseppe~Clemente}
\email{giuseppe.clemente@desy.de}
\affiliation{Deutsches Elektronen-Synchrotron (DESY), Platanenallee 6, 15738 Zeuthen, Germany}

\author{Massimo~D'Elia}
\email{massimo.delia@unipi.it}
\affiliation{Dipartimento di Fisica dell'Universit\`a di Pisa and INFN --- Sezione di Pisa, Largo Pontecorvo 3, I-56127 Pisa, Italy.}

\author{Lorenzo~Maio}
\email{lorenzo.maio@phd.unipi.it}
\affiliation{Dipartimento di Fisica dell'Universit\`a di Pisa and INFN --- Sezione di Pisa, Largo Pontecorvo 3, I-56127 Pisa, Italy.}

\author{Kevin~Zambello}
\email{kevin.zambello@pi.infn.it}
\affiliation{Dipartimento di Fisica dell'Universit\`a di Pisa and INFN --- Sezione di Pisa, Largo Pontecorvo 3, I-56127 Pisa, Italy.}

\begin{abstract}
In this paper, we show the application of the Quantum Metropolis Sampling (QMS) algorithm to a toy gauge theory 
with discrete non-Abelian gauge group $D_4$ in (2+1)-dimensions,
discussing in general how some components of hybrid quantum-classical algorithms 
should be adapted in the case of gauge theories. 
In particular, we discuss the construction of random unitary operators which preserve gauge invariance and act transitively on the physical Hilbert space, 
constituting an ergodic set of quantum Metropolis moves between gauge invariant eigenspaces,
and introduce a protocol for gauge invariant measurements. 
Furthermore, we show how a finite resolution in the energy measurements 
distorts the energy and plaquette distribution measured via QMS, 
and propose a heuristic model that takes into account part of the deviations 
between numerical results and exact analytical results,
whose discrepancy tends to vanish by increasing the number of qubits used for the energy measurements.
\end{abstract}

\maketitle

\section{Introduction}\label{sec:intro}
In recent decades, the application of Monte Carlo simulations on classical computers has 
proven to be a powerful approach in the investigation of properties of quantum field theories. 
Despite that, some regimes still appear not to be accessible efficiently, 
especially in cases where the standard path integral formulation, based on a quantum-to-classical mapping 
(Trotter-Suzuki decomposition~\cite{Trotter:1959,Suzuki:1976be}), results 
in an algorithmic sign problem.
Such a problem prevents, for example, a deeper understanding of the QCD phase diagram 
with a finite baryonic chemical potential 
term~\cite{Shapiro:1983,Rajagopal:2001,Philipsen:2010gj,Ding:2015,Aarts:2015tyj} 
or with a topological theta term~\cite{Gross:1995bp,Mannel:2007zz,Unsal:2012zj}.
Recent advancements in quantum computing hardware and software give hope that the sign problem 
can be avoided by directly using the quantum formulation of the theories under study, 
therefore without the need for a quantum-to-classical mapping.
In these regards, we consider the task of computing thermal averages of observables 
(i.e. Hermitian operators), 
which are essential for characterizing the phase diagram 
of lattice quantum field theory and condensed matter systems.
The thermal average for an observable $\hat{O}$ at inverse temperature $\beta$ is defined as
\begin{align}\label{eq:thermaver}
    {\expval{\hat{O}}}_\beta = \Tr \lbrack \hat{O}\hat{\rho}_\beta \rbrack, 
    \qquad \hat{\rho}_\beta 
    = \frac{e^{-\beta H}}{Z},
\end{align}
where $\hat{\rho}_\beta$ represents the density matrix of the system, defined in terms of its Hamiltonian $H$, while $Z=\Tr \lbrack e^{-\beta H} \rbrack$.
In the last two decades, different quantum algorithms have been proposed 
for the task of thermal average estimation 
or thermal state preparation~\cite{Lu_2021,Yamamoto:2022jes,Selisko_2022,Davoudi:2022uzo,Ball_2022,Fromm_2023,Poulin_2009,Bilgin_2010,Riera_2012,Verdon_2019,Wu_2019,Zhu_2020,Motta_2020,Sun_2021,Powers_2023}. 
In this work, we focus on studying a non-Abelian lattice gauge theory toy model: 
a finite gauge group $D_4$ in $2+1$ dimensions.
While the system we investigate is not generally affected by a sign problem, 
some formal complications arise from the need to ensure the gauge invariance 
for a Markov Chain Monte Carlo method. 
Indeed, being this a gauge theory, 
we are actually interested in the space $\mathcal{H}_{\text{phys}}$ of gauge invariant 
(also called \emph{physical}) states, representing only a subspace of the full extended space 
$\mathcal{H}_{\text{ext}}$ which is used to define the dynamical variables of the system. 
So, the actual physical density matrix we consider is
\begin{align} 
\hat{\rho}^{(\text{phys})}_\beta 
= \frac{e^{-\beta H}\rvert_{\mathcal{H}_{\text{phys}}}}{\Tr \lbrack e^{-\beta H}\rvert_{\mathcal{H}_{\text{phys}}} \rbrack},
\end{align}
while the gauge constraints have to be encoded in the algorithm.
In this paper, we consider the Quantum Metropolis Sampling (QMS) algorithm~\cite{QMS_paper} 
to compute thermal averages of the system.
However, unlike what happens with systems that allow an unconstrained thermal estimation as expressed by Eq.~\eqref{eq:thermaver} 
(see Refs.~\cite{Qubipf_QMS,Aiudi:2023cyq} for the application of QMS to these cases),
here we focus on the new challenges emerging from requiring the gauge invariance constraint at each step, 
such as how to select a set of gauge invariant and ergodic Metropolis moves 
and how to perform gauge invariant measurements.
Our analysis takes into account the systematic errors of the algorithm,
but it does not include sources of error induced by quantum noise.
In particular, since our results have been produced using a noiseless emulator,
gauge invariance can be exactly preserved, 
so we do not need to consider how it would be broken by quantum noise 
(see Refs.~\cite{Stryker:2018efp,Raychowdhury:2018osk,Halimeh:2020ecg,Lamm:2020jwv,VanDamme:2021njp,Mathew:2022nep,Halimeh:2022mct,Gustafson:2023swx} 
for discussions about gauge-symmetry protection in noisy frameworks).

Sec.~\ref{sec:sys} introduces the system under investigation,
while a supplementary review of different (but compatible) formulations possible for a lattice gauge theory with a general finite gauge group is presented in Appendix~\ref{app:transfmat_gropth}.
In Sec.~\ref{sec:algo} we give a brief overview of the QMS algorithm 
and how it has to be adapted in general in order to preserve gauge invariance at each step, for both evolution (Sec.~\ref{sec:sys}), Metropolis updates (Sec.~\ref{subsec:GImoves}), and measurements (Sec.~\ref{subsec:retherm-GImeas}). 
Appendix~\ref{app:twogens} contains the statement and sketch of proof of a theorem used in Sec.~\ref{subsec:QMS} to build a set of gauge invariant Metropolis updates introduced in Sec.~\ref{subsec:GImoves} and guarantee its ergodicity.
Numerical results for the thermal energy distribution and plaquette measurement 
are displayed in Sec.~\ref{sec:numres}. In order to assess and visualize the accuracy of the measured thermal energy distributions for different numbers of qubits for the energy resolution,
we use Kernel Density Estimators, which are briefly reviewed in Appendix~\ref{app:GKDE}.
Finally, conclusions and future perspectives are discussed in Sec.~\ref{sec:conclusions}.

\section{The system}\label{sec:sys}

In this Section, we introduce the system under investigation, a lattice gauge theory with finite dihedral group $G=D_4$ as gauge group,
which can be considered as a toy model for more interesting, but harder, systems such as Yang--Mills theories with continuous Lie groups in (3+1) dimensions.

A possible presentation for dihedral groups $D_{n}$ in
terms of two generators $s$ and $r$ (which can be considered respectively as a reflection and a rotation by $2\pi/n$ in a 2-dimensional plane) is the following
\begin{align}\label{eq:Dnpresentation}
D_n =\langle s,r | s^2=r^n=srsr=e \rangle,
\end{align}
where $e$ denotes the identity element.

Since $|D_4|=8$, each link variable can be represented with exactly 3 qubits. As our working basis for the extended Hilbert space, we use the magnetic one, which is defined using the values of the gauge group for each link: 
for a single $D_4$ link variable register, 
its 8 possible states of the computational basis
can then be mapped to group elements as $\ket{x_2,x_1,x_0} \longleftrightarrow s^{x_2} r^{2 x_1+x_0}$,
where $s$ and $r$ are the finite generators appearing in the group presentation of Eq.~\eqref{eq:Dnpresentation} specialized to $n=4$, while $(x_2,x_1,x_0)\in \mathbb{Z}_2^3$ is a triple of binary digits labeling states of the computational basis.

Due to limited resources available, we consider a system with a relatively small lattice with $|V|=2$ vertices and $|E|=4$ link variables associated with the (oriented) lattice edges $E$, namely a $2 \times 1$ square lattice with periodic boundary conditions (PBC) in both directions, as depicted in Fig.~\ref{fig:D4latt}.
\begin{figure}[t]
  \centering
  \includegraphics[width=1.0\linewidth]{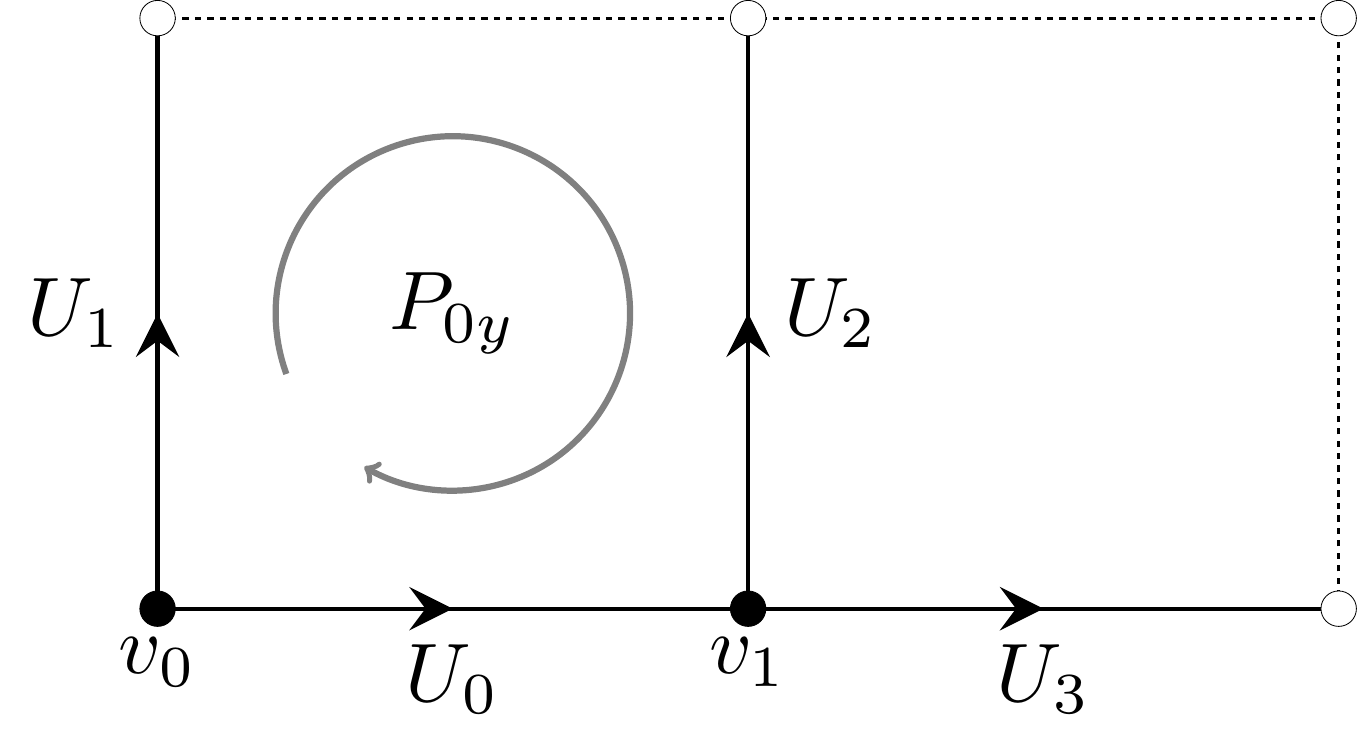}
  \caption{Square lattice with $2 \times 1$ sites and periodic boundary conditions used in this work.
  Vertices are denoted by $v_i$, while link variables are denoted by $U_l$; white dots are identified with black ones, while dashed lines are identified with solid ones on the opposite side.
  The left plaquette is highlighted with a circled arrow and denoted by $P_{0y}$.}
  \label{fig:D4latt}
\end{figure}
Denoting by $\mathcal{H}_{U_l}$ the Hilbert space of each gauge-group-valued variable $U_l$, 
the so-called \emph{extended} Hilbert space 
representing the system can be written as a tensor product \mbox{$\mathcal{H}_{\text{ext}} \equiv \bigotimes\limits_{\{U_l\}} \mathcal{H}_{U_l}=\mathcal{H}_{U_3}\otimes\mathcal{H}_{U_2}\otimes\mathcal{H}_{U_1}\otimes\mathcal{H}_{U_0}$}. 
For later convenience, in the following discussions we use the shorthand \mbox{$\ket{\vec{U}}=\ket{U_3}\otimes\ket{U_2}\otimes\ket{U_1}\otimes\ket{U_0}$} to indicate states of the extended Hilbert space in the computational link basis.

Since the system considered is actually a gauge theory,
only the subspace which is left invariant by the action of arbitrary local
gauge transformations $\mathcal{G}=\{(g_v)\in G^{|V|}\}$ should be considered \emph{physical}.
In terms of link variables, the action of a generic local gauge transformation is 
\begin{align}
 a_{(g_v)}: U_{(v_j \leftarrow v_i)} \mapsto  g_{v_j}^\dagger U_{(v_j \leftarrow v_i)} g_{v_i},
\end{align}
which lifts to a unitary operator acting on the Hilbert space of states as
\begin{align}
 \mathscr{U}_{a_{(g_v)}}=\sum_{\{U_l\}} \ketbra{\{U_{l}\}}{\{g_{v_h(l)}^\dagger U_{l} g^{\phantom{\dagger}}_{v_t(l)}\}},
\end{align}
where $v_t(l)$ and $v_h(l)$ are respectively the \emph{tail} and \emph{head} vertices of the link $l$.
Therefore, physical states are the ones invariant with respect to generic
local gauge transformations, which means 
\begin{align}
    \mathcal{H}_{\text{phys}} \equiv \Inv_{\mathcal{G}}\lbrack \mathcal{H}_{\text{ext}} \rbrack 
    \equiv \bigcap_{(g_v)\in \mathcal{G}}\Ker[\mathscr{U}_{a_{(g_v)}}-\matone].  
\end{align}
While the extended dimension of the system considered is ${|G|}^{|E|}=8^4=4096$ (i.e., 12 qubits on the system register),
as shown in Ref.~\cite{Mariani:2023eix} its physical dimension
can be computed as $\dim \mathcal{H}_{\text{phys}} = \sum\limits_{C \in \text{conj. class}} {\big(\frac{|G|}{|C|}\big)}^{|E|-|V|}$, which is $176$ for our lattice and group choice.
Notice that the left plaquette in Fig.~\ref{fig:D4latt}
\begin{align}\label{eq:P0y}
P_{0y}=\sum\limits_{\{U_l\}} \lbrack U_0^\dagger U_2^\dagger U_0 U_1\rbrack \ketbra{\vec{U}}
\end{align}
 is based on the vertex $v_0$ and cycles in the clockwise direction. This choice turns out to be convenient because we can just use group inversion gates ($\mathfrak{U}_{-1}$) and left group multiplication gates ($\mathfrak{U}_{\times}$),
defined in Ref.~\cite{Lamm:2019bik}, to write the gauge group value of the plaquette on the $U_1$ register without requiring additional ancillary registers, as shown in Fig.\ref{fig:selfplaq1}.
\begin{figure}[t]
    \centering
    \includegraphics[width=0.95\linewidth]{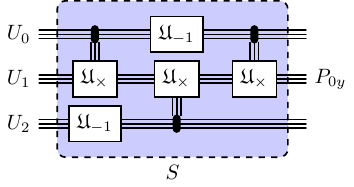}
    \caption{Circuit implementing a change of basis from the basis of the three link registers $U_0$, $U_1$ and $U_2$ in Fig.~\ref{fig:D4latt} to a basis with left plaquette $P_{0y}$ on the central register without further ancillary registers (see Eq.~\eqref{eq:P0y}). The gates $\mathfrak{U}_{-1}$ and $\mathfrak{U}_{\times}$ implement respectively the group inversion and group multiplication (two link registers involved), as defined in Ref.~\cite{Lamm:2019bik}. 
    The same circuit applied to the registers $U_3$, $U_2$ and $U_1$ can be used to rotate into a basis diagonal for the right plaquette operator.}
    \label{fig:selfplaq1}
\end{figure}
The general structure of Hamiltonian that we use in this work is of the Kogut--Susskind form (without matter), i.e., $H = H_V + H_K$, consisting in a magnetic (or potential) term, which encodes the contribution of spatial plaquettes, and an electric (or kinetic) term, which encodes the contribution of timelike plaquettes\footnote{The concept of a timelike plaquette is usually introduced in the standard discretized path-integral formulation obtained after the Trotter-Suzuki decomposition, where also gauge variables corresponding to temporal links are present. Having this in mind, the electric terms can be derived from timelike plaquettes imposing the temporal gauge, $U_{l}=\matone \; \forall l \text{ timelike}$.}. 
Following the same notation as Ref.~\cite{Lamm:2019bik}, for a lattice gauge theory with $D_4$ gauge group these terms can be written as a product over all plaquette terms 
\begin{align}\label{eq:D4_HV}
H_V &= -\frac{1}{g^2} \sum_p \Re\Tr[\prod_{\langle ij \rangle \in p} U_{ij}], \\\label{eq:D4_HK}
H_K &= -\Ln T_K,
\end{align}
where $p$ extends over (path-ordered) plaquettes, $g$ is the coupling parameter of the theory, while $\Ln T_K$ is the matrix logarithm of the kinetic part of the so-called \emph{transfer matrix} $T_K$, defined as having matrix elements 
 \begin{align}
\langle \Vec{U}^\prime | T_K | \Vec{U} \rangle = \prod_{l=0}^{|E|-1} e^{\frac{1}{g^2} \Tr[\rho_f(U^{\prime -1}_l U_l)]},
\end{align}
where $\rho_f$ denotes a fundamental (2-dimensional irreducible) representation of $D_4$.
As discussed in Sec.~\ref{subsec:QMS}, an essential ingredient in the implementation of the Quantum Metropolis Sampling algorithm is the time evolution of the system. 
In this case, this time evolution can be written as a second-order Trotter expansion~\cite{Trotter:1959,Suzuki:1976be} with $N$ time steps:
\begin{align}\label{eq:trot_timevo}
e^{-iHt} \longrightarrow \mathcal{U}(t) = {(e^{-i \frac{H_K t}{2 N}} e^{-i \frac{H_V t}{N}} e^{-i \frac{H_K t}{2 N}})}^N.
\end{align}
In particular, denoting the Trotter step size with $\Delta t = \frac{t}{N}$, the contribution of the potential term 
to the time evolution can be written as
\begin{align}
    e^{-iH_V \Delta t} = \prod_p \mathcal{U}_V^{(1)}(p),
\end{align}
where $\mathcal{U}^{(1)}_V$ acts on a single plaquette $p$. For example, for the left plaquette $P_{0y}$ one has 
\begin{align}\label{eq:pot_term_evo}
\mathcal{U}_V^{(1)}(\text{P}_{0y}) = e^{-\frac{i}{g^2}\Tr \rho_f(\text{P}_{0y}) \Delta t}.
\end{align}
So, in addition to the inversion gate $\mathfrak{U}_{-1}$ and the left group multiplication gates $\mathfrak{U}_{\times}$, a gate implementing the trace of group elements $\mathfrak{U}_{\Tr}$ is required. 
There are only two elements of $D_4$, $e$ and $r^2$, whose trace in the fundamental representation $\rho_f(g)$ is non-zero, i.e., $\Tr \rho_f(e) = 2$ and $\Tr \rho_f(r^2) = -2$. Therefore, one can perform the time evolution for a single plaquette term in Eq.~\eqref{eq:pot_term_evo} by first rotating (using the circuit $S$ as depicted in Fig.~\ref{fig:selfplaq1}) into a convenient basis where the plaquette information is stored in some register $U_l$, then applying a controlled phase gate according to
\begin{align}
    \mathfrak{U}_{\Tr}(\theta)  \ket{x_2,x_1,x_0}_{U_l} = \exp{2 i \theta {(-)}^{x_1}\delta^0_{x_2}\delta^0_{x_0}}\ket{x_2,x_1,x_0}_{U_l},
\end{align}
with $\theta = \frac{1}{g^2} \frac{t}{N}$, and finally rotating back
to the original link basis (using $S^\dagger$, i.e., the inverse circuit of $S$ in Fig.~\ref{fig:selfplaq1}).
By the use of these $3$ gates, named \textit{primitive gates} in Ref.~\cite{Lamm:2019bik}, one can perform the time evolution of the potential term. Now we apply the same idea to the kinetic term $e^{-iH_k\Delta t} = \prod_{\langle i j \rangle}\mathcal{U}_K^{(1)}(i,j)$ . 
In particular, for the 4-link lattice, one can write the kinetic part of the Hamiltonian as
\begin{align}
H_{k} = H_k^{(1)} \otimes \matone \otimes \matone \otimes \matone +   \matone  \otimes H_k^{(1)} \otimes \matone \otimes \matone \\ 
+ \matone  \otimes \matone \otimes  H_k^{(1)} \otimes \matone + 
\matone  \otimes \matone \otimes \matone \otimes  H_k^{(1)}, 
\end{align}
which is the sum of a single variable kinetic Hamiltonian for each link variable of the theory, such that $H_k^{(1)} = -\Ln T_k^{(1)}$, where 
\begin{align}
\bra{U^\prime}T_k^{(1)}\ket{U} = e^{\frac{1}{g^2}\Tr [\rho_f (U^{\prime \dagger}U)]}.
\end{align}
Each $\mathcal{U}_K^{(1)}$ term can be implemented by 
\begin{align}
\mathcal{U}_K^{(1)} (i,j) = \mathfrak{U}_F\mathfrak{U}_{\text{phase}} \mathfrak{U}_F^{\dagger},
\end{align}
where $\mathfrak{U}_F$ is the fourth primitive gate, which performs the Fourier Transform of the $D_4$ group and diagonalizes $T_K^{(1)}$. To implement $\mathfrak{U}_F$, we used the circuit introduced in Ref.~\cite{Lamm:2019bik} (an alternative is reported in Ref.~\cite{alam2022primitive}). 
In addition to each of the $4$ primitive gates, also the $\mathfrak{U}_{\text{phase}}$ depends on the gauge group. 
As discussed with more details in Appendix~\ref{app:transfmat_gropth}, $T_K^{(1)}$
can be written as block diagonal on the basis of irreducible representations (irreps).
Therefore, it becomes diagonal after the application of a Fourier Transform gate $\mathfrak{U}_F$,
while the $\mathfrak{U}_{\text{phase}}$ gate is diagonal and its entries can be computed 
from the contribution of each irrep subspace as reported in~\ref{subapp:transfmatD4}.
Another possibility to define ab initio a kinetic Hamiltonian for finite gauge groups is discussed in Ref.~\cite{Mariani:2023eix}, where the authors point out that there is a certain degree of 
arbitrariness in this definition, which is fixed only by imposing additional physical constraints. 
For example, it is straightforward to show that, 
by requiring Lorentz invariance of the space-time lattice, 
it is possible to match this ab initio definition with 
the one derived from the (Euclidean) Lagrangian formulation, as the one used in Ref.~\cite{Lamm:2019bik} to implement the real-time simulations with $D_4$ group.
More details about this matching can be found in Appendix~\ref{app:transfmat_gropth}.

\section{The algorithm}\label{sec:algo}
Here we give an overview of the algorithm we use to compute 
thermal averages and discuss the specific challenges of its application and 
the adaptations that have to be considered in the case of gauge theories in general, and for the system introduced in Sec.~\ref{sec:sys} in particular.

\subsection{Overview of Quantum Metropolis Sampling}\label{subsec:QMS}
In this Section, we sketch the algorithm we use for the following results. 
This is based on a generalization of the classical Markov Chain Monte Carlo with Metropolis importance sampling called Quantum Metropolis Sampling (QMS)~\cite{QMS_paper}. 
Here we just mention the main features of the QMS which we use in the following
discussion when adapted to the case with gauge invariance. 
A more detailed description of the QMS algorithm and its systematic errors can be found in 
Refs.~\cite{QMS_paper,Qubipf_QMS,Aiudi:2023cyq}.

Given the Hamiltonian representing the system under study, 
one can formally decompose it, according to the spectral theorem, in terms of its spectrum and
eigenspace projectors $H=\sum_{k} E_k \mathbb{P}_{V_k}$. 
The general idea of the QMS algorithm consists of producing a Markov chain of pairs eigenvalue-eigenstates
\begin{align}
    \begin{bmatrix} E_{k_0}\\\ket{\varphi_{0}} \end{bmatrix}
    \to
    \cdots
    \begin{bmatrix} E_{k_i}\\\ket{\varphi_{i}} \end{bmatrix}
    \to
    \begin{bmatrix} E_{k_{i+1}}\\\ket{\varphi_{{i+1}}} \end{bmatrix}
    \to
    \cdots
    \begin{bmatrix} E_{k_M}\\\ket{\varphi_{M}} \end{bmatrix}
\end{align}
where each sampled state belongs to the corresponding $H$ eigenspace (i.e., $\mel{\varphi_{i}}{\mathbb{P}_{V_{k_{i}}}}{\varphi_{i}}=1$). In the QMS algorithm, 
after some number of steps $M$, required for thermalization purposes, 
the probability of sampling a state in $V_k$ reproduces 
the Gibbs weight expected from the density matrix
\begin{align}\label{eq:GibbsWeight}
p_{k} = \Tr[\hat{\rho}(\beta) \mathbb{P}_{V_{k}}]=\mu_k e^{-\beta E_k}/Z 
\end{align}
where $\mu_k= \Tr[\mathbb{P}_k]$ denotes the multiplicity of $E_k$.
Therefore, the terminal eigenstate $\ket{\varphi_{M}}$ of a chain can be used to perform a measurement of the observable one is interested in, 
whose expectation value can then be assembled as a simple average of different chains 
$\expval{\hat{\mathcal{O}}} \simeq \overline{\mathcal{O}}=\frac{1}{N_{\text{chains}}}\sum_{s=1}^{N_{\text{chains}}} O_s$.
Since a random initial state of the extended Hilbert space has almost surely a non-vanishing overlap with the unphysical subspace,
one should explicitly initialize the chain in a gauge invariant way.
Indeed, for any gauge group $G$, it is always possible to initialize
in a gauge invariant state by setting every link variable to the trivial\footnote{If one is interested in studying gauge sectors with non-zero static charges, one can choose as initial state any combination of irreps states which are contained in that sector.}
irrep state (corresponding to its $0$-mode), which is realized by an application of an inverse Fourier transform gate to the zero-mode state for each link register ${\ket{\psi_0}}=\bigotimes_l {\Big(\mathfrak{U}_F^\dagger {\ket{\widetilde{0}}}\Big)}_l$; in the case of the group $D_4$, this initialization is also possible through the application of Hadamard gates
to each link register:
\begin{align}\label{eq:D4init}
    \ket{\psi_0} = \bigotimes_{l\in E}{\Big(\text{Had}^{\otimes 3}\ket{000}\Big)}_l=\bigotimes_{l\in E}{\Big(\frac{1}{\sqrt{8}}\sum_{\vec{x}\in \mathbb{Z}_2^3}{\ket{x_2,x_1,x_0}}_l\Big)}.
\end{align}
The state $\ket{\psi_0}$ obtained is not an eigenstate of $H$, 
but it can be projected into an approximate eigenstate
through the application of a Quantum Phase Estimation (QPE) operator~\cite{nielsen:2010quantum,Kitaev:1995qy}, 
which is described more in detail in Sec.~\ref{subsec:exactqped}.
Indeed, the QPE is one of the main ingredients used to encode and measure 
the energy of states and build the accept-reject oracle of the QMS. 
The unitary operator used for the QPE step is based on a controlled time evolution described 
by Eq.~\eqref{eq:trot_timevo},  whose implementation is discussed in Sec.~\ref{sec:sys}.

Another component of the QMS we need to mention is
the analog of the accept-reject procedure featured in the classical Metropolis algorithm.
This is retrieved by implementing an oracle that makes use of a 1-qubit register,
which we call \emph{acceptance register},
storing the condition for acceptance or rejection~\cite{QMS_paper}, according to 
the Metropolis probability~\cite{Metropolis:1953am} of transition between 
eigenstates given by 
\begin{align}\label{eq:acc_rej}
p_{\text{acc.}}{({\ket{\varphi_{i}}\longrightarrow \ket{\varphi_{j}}})} 
= \text{min}(1,e^{-\beta(E_j-E_i)}).   
\end{align}
However, the quantum nature of the algorithm complicates the rejection process: 
due to the \textit{no-cloning theorem}, after a measurement, 
it is not possible to retrieve from memory the previous state anymore. 
To solve this issue, in Ref.~\cite{QMS_paper} an iterative procedure is proposed, 
whose purpose is to find a state with the same energy as the previous one $E_k = E_i$, 
(i.e., in the same microcanonical ensemble): in this case, 
even if the new state is not exactly the original one, the whole process can be viewed as a
standard Metropolis step, followed by a microcanonical update.
The transition between eigenstates $\ket{\varphi_i} \to \ket{\varphi_{i+1}}$ 
is handled by performing a random choice in a predefined set 
$\mathcal{C}$ of unitary operators called \emph{moves}. 
An application of these, followed by the acceptance oracle, brings any state $\ket{\varphi_i}$ to a superposition of different possible eigenstates, each weighted by an additional contribution from the acceptance probability in Eq.~\eqref{eq:acc_rej}, 
while an energy measurement on this state makes it collapse 
to a specific new eigenstate $\ket{\varphi_{i+1}}$. The new eigenstate is then accepted or rejected according to a measurement on the acceptance register.
As discussed in Sec.~\ref{subsec:GImoves} in more detail, in the case of gauge theories
one should also guarantee that the choice and implementation of moves
preserves gauge invariance of the trial state.

Furthermore, there is another issue due to the quantum nature of the algorithm, 
emerging when one needs to compute thermal averages of any 
(gauge invariant) observable $\hat{O}$ not commuting 
with the Hamiltonian ($\lbrack\hat{O}, \hat{H}\rbrack \neq 0$). 
Measuring such observables will make the state in the system register collapse 
to an eigenstate of $\hat{O}$ which, 
in general, does not belong to any eigenspace of $\hat{H}$, 
bringing the Markov chain out of thermodynamic equilibrium after measurement. 
Since any measurement should be performed only once the Markov chain 
is stationary (or \emph{thermalized}), there are at least two ways to reach this goal: 
the simplest one consists of resetting the Markov Chain, 
initializing the system state again as in Eq.~\eqref{eq:D4init} 
and starting over with a new chain; 
another possibility consists of measuring again the energy of the resulting state 
and performing a certain number $r$ of QMS step to make the chain \emph{rethermalize} before a new measurement. 
As argued in Ref.~\cite{Qubipf_QMS}, 
the latter approach has the advantage of starting from a state 
that has a higher overlap with the stationary distribution.
However, using this approach in the case of gauge theories, 
we need to ensure the additional requirement of preserving gauge invariance 
of the states at the measurement stage, as described in Sec.~\ref{subsec:retherm-GImeas}.

The code of the QMS algorithm with $D_4$ lattice gauge theory, 
used to obtain the results for this paper, 
is implemented with a hybrid quantum-classical emulator developed 
by some of the authors and publicly available in~\cite{SUQApub}.

\subsection{Gauge invariant ergodic moves}\label{subsec:GImoves}
In the case of gauge theories, the algorithm discussed in the previous section needs to be adapted in order to ensure that the state is initialized and maintained as a gauge invariant state.
This means that the moveset $\mathcal{C}=\{U_m=e^{i \theta_m A_m}\}$ should satisfy both \emph{gauge invariance} and \emph{ergodicity}.
The former requirement can be satisfied by using only gauge invariant generators $\{A_m\}$, i.e., 
such that $[A_m,\mathcal{G}_v]=0 \quad \forall m,v$.
The condition of ergodicity, in the case of QMS, means that the action of an arbitrary sequence of moves is transitive in the space of physical states (i.e., gauge invariant states).
This guarantees that the whole physical Hilbert space is in principle within reach, but it does not give information about efficiency. 
As in the case of classical Markov Chain Monte Carlo with importance sampling, the possibility of reaching any possible physical state does not 
in general correspond to a uniform exploration (unless the system is studied in the extremely high-temperature regime, i.e., vanishing $\beta$ as studied in Sec.~\ref{subsec:numres-beta0}), 
so the curse of dimensionality becomes treatable.

In order to guarantee an ergodic exploration of the whole physical Hilbert space, 
we exploit the property that a set $\mathcal{C}=\{U_m\}$ made of moves generated using two random hermitian and gauge invariant generators is sufficient to explore the whole physical Hilbert space (see Appendix~\ref{app:twogens} for a more precise statement and a sketch of the proof).
This allows us to just use two random gauge invariant Hermitian operators
as infinitesimal generators of the special unitary group $SU(\mathcal{H}_{\text{phys}})$. 
At this point, there is some freedom in this random selection 
but, in practice, we make a specific choice that makes use of a useful partition of the generators inspired by the proof of Theorem~\ref{thm:twoDenseSUN}.
First of all, we notice that it is possible to associate the set 
of all gauge invariant Hermitian operators that are diagonal in link basis\footnote{In Ref.~\cite{Durhuus:1980}) has been proved that, for Lie groups, the set of all Wilson lines (as gauge invariant Hermitian operators) is sufficient to span the whole space of gauge invariant functions in link basis, but this is not guaranteed to hold for some gauge theories with finite gauge group.} to a Cartan subalgebra of $SU(\mathcal{H}_{\text{phys}})$. 
For the same reasons, the Hermitian operators corresponding to projectors 
into different irreps for individual link variables 
(appearing as the generators of the kinetic part of the transfer matrix for each link),
can be associated to the roots elements of the algebra.
According to the \emph{root space decomposition}~\cite{georgi1982lie}, 
these two (mutually non-commuting) sets of Hermitian operators form a basis
for the full algebra. Therefore, by Theorem~\ref{thm:twoDenseSUN} 
the two generators, built as random linear combinations of elements from both sets, 
generate the whole (special) unitary group $SU(\mathcal{H}_{\text{phys}})$.
In the case of our system, in practice, we considered a random linear combination 
$\hat{A}_1 \equiv \sum_{\gamma} r^{(1)}_{\gamma} \hat{W}_\gamma$ 
of independent Wilson line operators to define the generator for the first move $R_1 = e^{i \theta_1 \hat{A}_1}$, and a random linear combination $\hat{A}_2 \equiv \sum_{l,j} r^{(2)}_{j;l} \mathbb{P}_j^{(l)}$ of irrep projectors (defined in Appendix~\ref{app:transfmat_gropth} for each link variable $l$) to define the generator for the second move $R_2 = e^{i \theta_2 \hat{A}_2}$.
This set of moves $\mathcal{C}\equiv\{R_1,R_1^\dagger,R_2,R_2^\dagger\}$
is \emph{ergodic} in the sense of allowing the reachability of all physical eigenstates 
after a finite sequence of applications. 
This is theoretically guaranteed with probability 1 by Theorem~\ref{thm:twoDenseSUN}, 
and checked numerically in Sec.~\ref{subsec:numres-beta0}.
At this level, we did not mention arguments about efficiency, 
but, for practical considerations,
the quality of numerical results have not shown particular improvements with different choices
of the moveset.

As a final consideration, we should mention that the procedure of writing down 
all independent Wilson lines (i.e., products of link variables on closed loops) 
used in the construction of one of the generators does not scale well for larger lattices.  
One possibility, instead of precomputing the action of all of Wilson loops, 
is to build random closed loops with arbitrary length,
possibly with an exponential tail in the random distribution preventing them from diverging 
in practice, and using them to build generators on the fly. 
This would in general require more steps to allow arbitrary overlaps with physical states,
but it would be manageable in principle.
Another possibility would be to formally identify how a generic transformation
of a link variable acts on neighboring link variables, 
i.e., the ones that share the same vertices in the lattice, 
and considering the projection of the output state that preserves the gauge sector.
In principle, this can be done by a generic transformation on the link variable register, 
followed by a measurement of the Gauss law operators 
(or of the gauge invariant projectors for each vertex, in the case of finite groups).
However, in this case, there would be a non-negligible probability 
that the projective measurement yields an unphysical state, forcing the whole chain to be restarted. 

\subsection{Effects of Quantum Phase Estimation on measured spectrum}\label{subsec:exactqped}
One of the most significant sources of systematic errors in QMS is due to the 
Quantum Phase Estimation (QPE) step, 
used to estimate the energy of the system state. 
The operator $\Phi_{\text{QPE}}$ has the effect of ``writing'' 
an estimate of the eigenvalue $E_k$ associated with the eigenvector $\phi_k$
on the energy register, which is represented by $q_e$ qubits:
\begin{align}\label{eq:QPE_Phi}
    \Phi_{\text{QPE}}: \ket{0}^{\otimes q_e} \ket{\phi_k} \longmapsto \ket{E_{k}} \ket{\phi_k}.
\end{align}
In practice, this is done by defining a uniform grid in the range between
 some chosen $E_{\text{min}}^{(\text{grid})}$ and $E_{\text{max}}^{(\text{grid})}$, which are respectively mapped to the states $\ket{00\dots0}$ and $\ket{11\dots1}$ of the computational basis for the energy register. 
 The other states correspond to the grid sites 
\mbox{$E_j^{(\text{grid})}=E_{\text{min}}^{(\text{grid})}+\varepsilon j$} for all $j=0,\dots,2^{q_e}-1$, and with uniform grid spacing $\varepsilon = \frac{E_{\text{max}}^{(\text{grid})}-E_{\text{min}}^{(\text{grid})}}{2^{q_e}-1}$. 
In the following discussions, we refer to these levels as the \emph{QPE grid}.
Besides very special cases, the energy levels of the system $E_k$ do not fit with the sites of the grid, so we need to take care of QPE.
This means also that the individual states in the chain would only be \textit{approximate} eigenstates, making the actual exact spectrum be \emph{distorted} by the presence of the QPE grid. 
The effect of this kind of error coming from a finite energy resolution used 
in the accept-reject stage has been investigated in the case 
of classical Markov chains (see Ref.~\cite{Roberts:1998,Breyer:2001}).
As discussed in Refs.~\cite{QMS_paper,benenti2004principles,Cleve:1998}, 
the squared amplitude of the $j-$th state of the grid 
for a QPE applied to a system state with true energy $E_k$ is
\begin{equation}\label{eq:QPEcoeffs}
\begin{aligned}
    {|c_{k,j}|}^2 &= \frac{1}{4^{q_e}} \frac{\sin^2\big[\frac{\pi}{\varepsilon} (E_k-E_j^{(\text{grid})})\big]}{\sin^2 \big[\frac{\pi}{\varepsilon 2^{q_e}} (E_k-E_j^{(\text{grid})})\big]},
\end{aligned}
\end{equation}
which is peaked around the true eigenvalue $E_k$. 
Since the energy measurements of the QMS lie on the QPE grid, 
the whole energy distribution sampled would be affected by this \emph{QPE distorsion}, but we still need to take into account the different contribution associated with each eigenvalue coming from the Gibbs weights in Eq.~\eqref{eq:GibbsWeight}.
Having access to the exact spectrum and energy distribution, 
we can write a rough estimate of the QPE-distorted energy distribution 
using the coefficients in Eq.~\eqref{eq:QPEcoeffs} to determine a ``distorsion'' map
\begin{align}\label{eq:grid_therm}
E_k &\longrightarrow \widetilde{E}_k  = \sum_j \abs{c_{k,j}}^2 E^{(\text{grid})}_j,\\
\label{eq:grid_therm_c}
w_k = \frac{e^{-\beta E_k}}{Z} &\longrightarrow \tilde{w}_k = \sum_k \frac{e^{-\beta \tilde{E}_k}}{\tilde{Z}}.
\end{align}
The first equation describes the mean value of the energy applying a QPE to an exact eigenstate with energy $E_k$, according to Eq.~\eqref{eq:QPEcoeffs}. 
The weights in Eq.~\eqref{eq:grid_therm_c} can then be assembled to build 
the expected QPE-distorted distribution as follows:
\begin{equation}\label{eq:QPE-distorsion}
    p^{(QPEd)}_j = \sum_k \abs{c_{k,j}}^2 \tilde{w}_k.
\end{equation}
At the same time, the expectation value of the energy is expected 
to be represented more accurately as follows:
\begin{equation}\label{eq:QPE-distorsionExpval}
\langle H \rangle = \sum_j E_j w_j \longrightarrow {\langle H \rangle}^{(QPEd)} = \sum_k \tilde{E}_k \tilde{w}_k.
\end{equation}
Actually, the distortion described by Eq.~\eqref{eq:QPE-distorsion} 
is not the only effect of the introduction of a QPE grid: 
indeed, for a step $\ket{\varphi_i}\longrightarrow \ket{\varphi_j}$ of the Markov Chain, 
the acceptance probability $p_{\text{acc.}}$ in Eq.~\eqref{eq:acc_rej} involves
states which are not exactly eigenstates of the Hamiltonian, but a superposition of
them.
In particular, as shown by the behavior of numerical results presented in Sec.~\ref{sec:numres},
Eq.~\eqref{eq:QPE-distorsion} is speculative and inaccurate in some regimes, 
and it appears to represent well the QMS data only at small values of $\beta$.
Understanding how the QPE affects the statistical weights is not a trivial problem. 
However, as argued in Sec.~\ref{subsec:numres-betafin}, this source of systematic error is expected to disappear 
as the number of qubits in the energy register is increased and the artifacts of the finite representation become negligible.

\subsection{Revert procedure and tolerance}\label{subsec:algo-revert}
As mentioned in~\ref{subsec:QMS}, when the chain step $i \longrightarrow j$ 
is rejected, the limitations imposed by the \textit{no-cloning} theorem 
are overcome by means of an iterative procedure, 
which stops whenever a measurement yields the same energy as the previous state: $E_j^{(\text{grid})} = E_i^{(\text{grid})}$. 
In general, it is useful to set a maximum number of iterations for this procedure, 
after which the Markov Chain is aborted and the system state is initialized again, requiring a new thermalization stage. 
This occurrence affects the efficiency of the algorithm, 
reducing the typical length of allowed Markov chains and slowing down the general sampling rate of observables measured, 
and becomes worse as the number of qubits $q_e$ in the energy register and the inverse temperature $\beta$ are increased.
Indeed, the higher $\beta$, the more likely a move step is rejected due to a lower acceptance probability.
At the same time, the higher $q_e$, 
the more difficult is to revert back to the same site of the grid that corresponds 
to the previous state $E_i^{(\text{grid})}$, 
i.e. the average number of needed iterations increases dramatically. 
To this end, we propose a simple solution, consisting a relaxation of the constraint 
$E_j^{(\text{grid})} = E_i^{(\text{grid})}$ to a more easily 
achievable condition $\abs{E_j^{(\text{grid})}-E_i^{(\text{grid})}} \leq \varepsilon m_{\text{tol}}$, 
where $m_{\text{tol}}$ represents the accepted tolerance in grid units, 
while $\varepsilon = \frac{E_{\text{max}}^{(\text{grid})}-E_{\text{min}}^{(\text{grid})}}{2^{q_e}-1}$ is the grid spacing. 
Notice that at $\beta=0$ all moves are always automatically accepted, 
since the Metropolis acceptance probability Eq.~\eqref{eq:acc_rej} is one for transition 
between any eigenstates. Therefore, there is no need to use a non-zero tolerance in this case (i.e., $m_{\text{tol}}=0$ for $\beta = 0$).

\subsection{Rethermalization and gauge invariant measurements}\label{subsec:retherm-GImeas}
As mentioned in Sec.~\ref{subsec:QMS}, when a rethermalization strategy is used,
one should perform measurements in a gauge invariant fashion, 
in order to ensure gauge invariance of the state even after measurement.
Since any gauge invariant observable $\hat{O}$ 
(such as the trace of the real part of a plaquette operator)
commutes with a generic local gauge transformation $\mathcal{G}$, 
each of its eigenspaces must also commute, i.e., 
$\hat{O} = \sum_s \lambda_s \mathbb{P}_{V_s}$ and $[\mathcal{G},\mathbb{P}_{V_s}]=0$,
where we denote the eigenvalues and eigenspaces of $\hat{O}$ by pairs of $(\lambda_s,V_s)$,
while $\mathbb{P}_{V_s}$ is the projector operator into the eigenspace $V_s$.
A proper gauge invariant measurement should then project into these eigenspaces  
(or any unions of subsets of them), otherwise there is no guarantee that the collapsed state after measurement would be gauge invariant.
For example, in the case of a measurement of the trace of a plaquette with $D_4$ gauge group, the possible values observed are $2^{(1)}$, $-2^{(1)}$ and $0^{(6)}$ (with multiplicity of the eigenspaces shown in parenthesis). 
Once the product of link variables composing a plaquette $\text{Pl}$ is stored in a gauge group-valued register, the corresponding eigenspaces of the real part of the trace can be expressed using projectors which are diagonal in the magnetic basis, introduced in Sec.~\ref{sec:sys}, as follows:
\begin{align}
    \Re \Tr \rho_f (\text{Pl}) &= (+2) \mathbb{P}_{V_{+2}}+(-2) \mathbb{P}_{V_{-2}} + (0) \mathbb{P}_{V_{0}}\\
                         &= +2 \ketbra{e}{e}_{\text{Pl}} -2 \ketbra{r^2}{r^2}_{\text{Pl}},
\end{align}
where $\ket{\cdot}_{\text{Pl}}$ denotes a plaquette register and $\rho_f(\cdot)$ is a fundamental representation of $D_4$.
If we directly measured the state on the three qubits representing the left plaquette $P_{0y}$,
we would get the correct eigenvalue (either $+2$, $-2$ or $0$), 
but the state resulting from the collapse would in general not be gauge invariant anymore (at least, not if the measurement returns $V_0$, whose multiplicity is $6$).

In other words, one should always be careful not to export, naively, classical computational schemes which are not suitable to a quantum context.
In classical simulations of lattice gauge theories, it is usual to write numerical codes  
which go through the computation of non-gauge-invariant quantities before obtaining the desired gauge invariant observable; for instance,
like in this particular case, closed parallel transports are first computed, which are not gauge invariant and transform in the adjoint representation, taking their gauge invariant trace thereafter. This computational scheme does not work in this context, at least if one wants to keep a gauge invariant 
physical state through all the steps of the quantum computation.

Instead, in order to keep gauge invariance of the resulting state, 
we can first perform a measurement discriminating between $V^\prime\equiv V_{+2}+V_{-2}$ and $V_0$ and then, conditionally to the results, if a collapse into $V^\prime$ happens, another measurement is done to discriminate between $V_{+2}$ and $V_{-2}$.
This measurement procedure is sketched in Fig.~\ref{fig:GImeasD4}
and described in detail in the caption.
Notice that the terminal state is collapsed to an eigenstate of the (real part of the trace of the) plaquette, but unlike destroying the state after measurement, we can continue using it as a starting point for rethermalization.
\begin{figure*}
    \centering
    \includegraphics[width=0.95\linewidth]{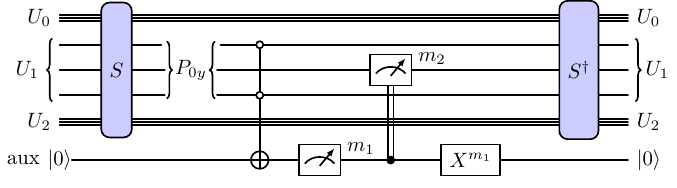}
    \caption{Hybrid protocol implementing a gauge invariant measurement for the plaquette value $\Re \Tr \rho_f(P_{0y})$.
    The $S$ gate group is implemented with the circuit shown in Fig.~\ref{fig:selfplaq1}. The result of the first measurement might yield either $m_1=0$ or $m_1=1$: in the first case, the terminal state is projected into the $V_0$ eigenspace of $\Re \Tr \rho_f(P_{0y})$ and nothing else has to be done; in the second case (i.e., for $m_1=1$), a further measurement on the $P_{0y}$ register yields either $m_2=0$ or $m_2=1$, while the state is projected respectively on either $V_{+2}$ or $V_{-2}$.
    The auxiliary register can then be reset to $0$ according to the result of the first measurement $m_1$ with a flip gate.}
    \label{fig:GImeasD4}
\end{figure*}

\section{Numerical Results}\label{sec:numres}

In this Section we are going to illustrate the numerical results obtained 
for the quantum simulation, through the QMS algorithm discussed in the previous Section, of the thermal ensembles
of the pure-gauge $D_4$ lattice gauge theory with topology depicted in Fig.~\ref{fig:D4latt}.
We will discuss, in particular, the sampled distribution over the Hamiltonian eigenvalues,
comparing it with theoretical predictions, as well as the average energy and plaquette.
 
For the purpose of studying mainly the gauge adaptation of this algorithm 
(and not the physics of the system), without loss of generality, 
in the following discussion we use the Hamiltonian made of the terms~\eqref{eq:D4_HV} and~\eqref{eq:D4_HK}, always fixing the gauge coupling to the value $\frac{1}{g^2}=0.8$,
which results in a spectrum well spread between $E_{\text{min}}^{(\text{phys})}\simeq -11.172$ 
and $E_{\text{max}}^{(\text{phys})}\simeq -1.998$.
In order to prevent leak effects on the boundary of the QPE grid range (see discussion in Sec.~\ref{subsec:exactqped}), we made a common conservative choice of the range for all 
the number of qubits for the energy register investigated ($q_e=3,\dots,7$), 
namely $[E_{\text{min}}^{(\text{grid})},E_{\text{max}}^{(\text{grid})}]=[-13,0]$.
The systematic error coming from a finite Trotter size has been assessed 
and, for the following results, 
we found it to have negligible effects on 
the spectrum distribution for $N=10$ time steps for each power of the time evolution operator in the QPE (i.e.,
$\delta t \sim \frac{\pi (1-2^{-q_e})}{10 \cdot \Delta E^{(\text{grid})}}$).

The QMS has been implemented based on the set of gauge-invariant ergodic moves illustrated in the 
previous Section, namely, $\mathcal{C}\equiv\{R_1,R_1^\dagger,R_2,R_2^\dagger\}$, assigning an equal 25~\% probability 
of selecting one of the 4 moves at each step.
Furthermore, as discussed in Sec.~\ref{subsec:algo-revert},
to gain better efficiency, at the cost of losing some resolution in energy, 
it is useful to set a tolerance margin $m_{\text{tol}}=3$ for the revert procedure whenever 
a move is rejected, which happens more frequently at higher values of $\beta$. 
For our results at $\beta=0.5$, we use $m_{\text{tol}}=3$ for $q_e \geq 5$.

\subsection{Tests of ergodicity and gauge invariance}\label{subsec:numres-beta0}
We first consider the case at infinite temperature ($\beta=0$), 
which would ideally result in a uniform sampling of the whole physical Hilbert space (i.e., $\rho(\beta=0) = {|{\dim \mathcal{H}_{\text{phys}}}|}^{-1}\mathbb{P}_{\mathcal{H}_{\text{phys}}}$).
Therefore, a proper sampling of this distribution would serve both as a check of gauge invariance and ergodicity. Indeed, no unphysical energy levels should be detected and all eigenspaces of $H$ should be explored with the correct physical multiplicity $\mu^{(\text{phys})}_k = \dim {(V_k \cap \mathcal{H}_{\text{phys}})}$.

In the present work, we use two approaches to represent the energy distributions for the exact, QPE-distorted, and numerical data:
on one hand, we make histograms on bins around QPE grid points and with bin size corresponding 
to grid spacing; on the other hand, given the different domains between the exact spectrum and 
the one measured on the QPE grid, we perform a smoothing of the distributions using the Kernel Density Estimation (KDE) technique. More details on such technique are illustrated and discussed in Appendix~\ref{app:GKDE}.
Fig.~\ref{fig:data_b0_r1_th50_ne357} shows the energy distributions measured at $\beta=0$ (i.e., the whole physical spectrum) for different numbers of qubits in the energy registers, using both a histogram representation with bins centered on the QPE grid sites and a KDE representation with smoothing parameter of the KDE kernel functions set to match the bin size of the histograms, i.e., $\sigma_{\text{KDE}}=\frac{\Delta E^{(\text{grid})}}{2^{q_e}-1}$.
The energy distribution distorted by QPE as described in Sec.~\ref{subsec:exactqped} and the one of the exact spectrum are also shown in comparison.
\begin{figure*}
\begin{subfigure}{0.475\linewidth}
  \includegraphics[width=0.95\textwidth]{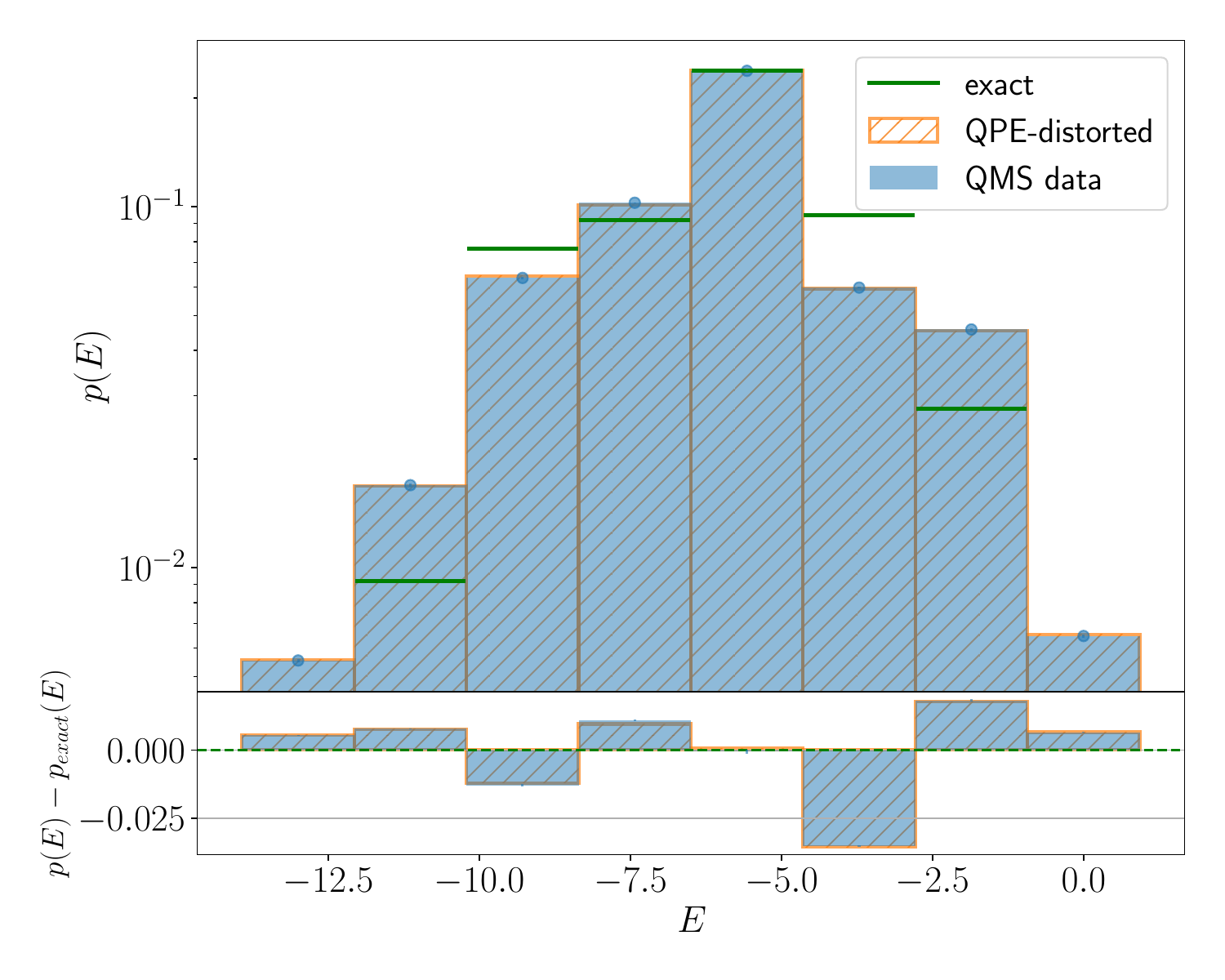}
  \caption{Binned histogram of the energy distribution ($q_e=3$).}
  \label{fig:histo_b0_r1_th50_ne3}
\end{subfigure}\hfill
\begin{subfigure}{0.475\linewidth}
  \includegraphics[width=0.95\textwidth]{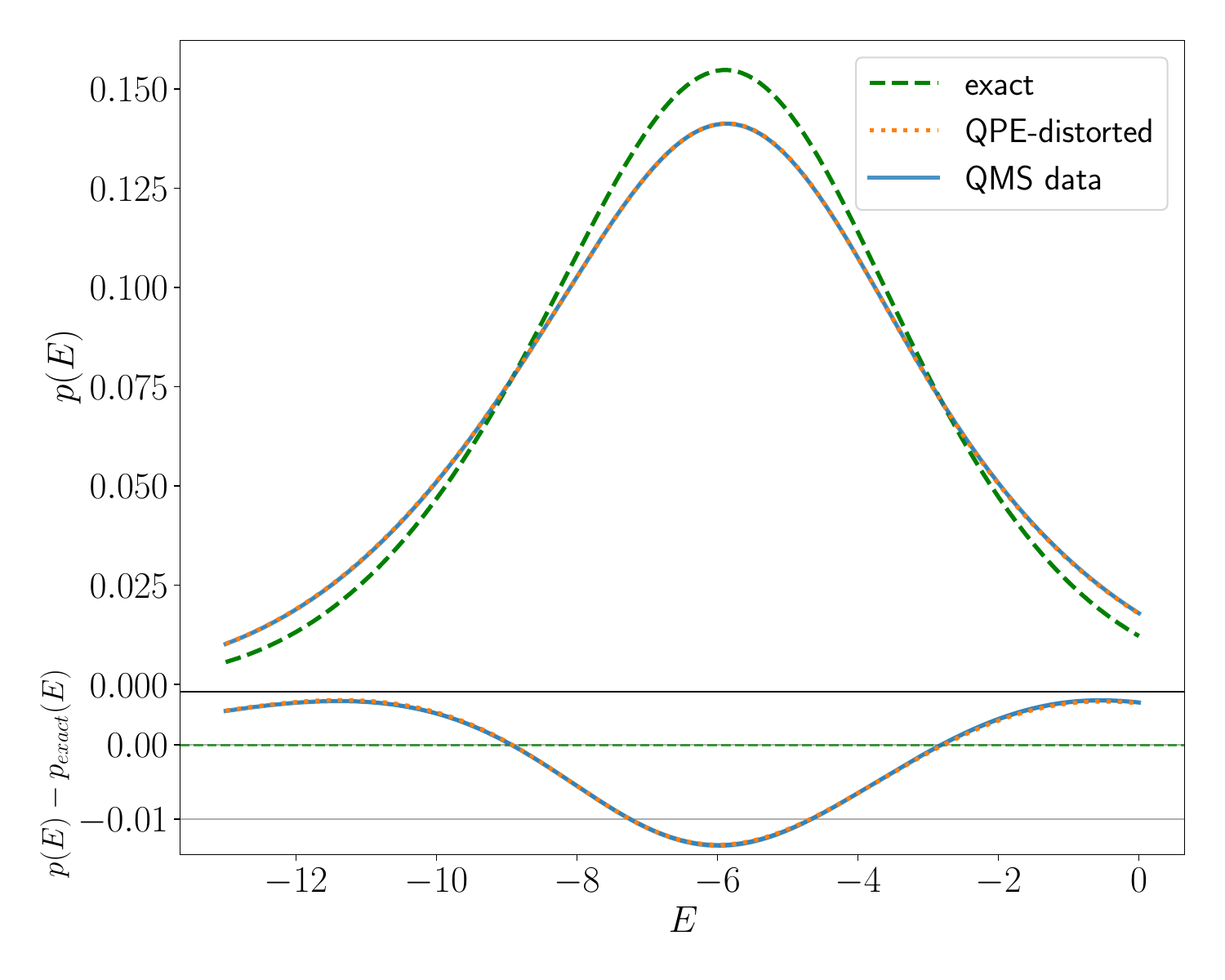}
  \caption{KDE of energy distribution ($q_e=3$, $\sigma_{\text{KDE}}\simeq 1.86$).}
  \label{fig:kde_b0_r1_th50_ne3}
\end{subfigure}
\begin{subfigure}{0.475\linewidth}
  \includegraphics[width=0.95\textwidth]{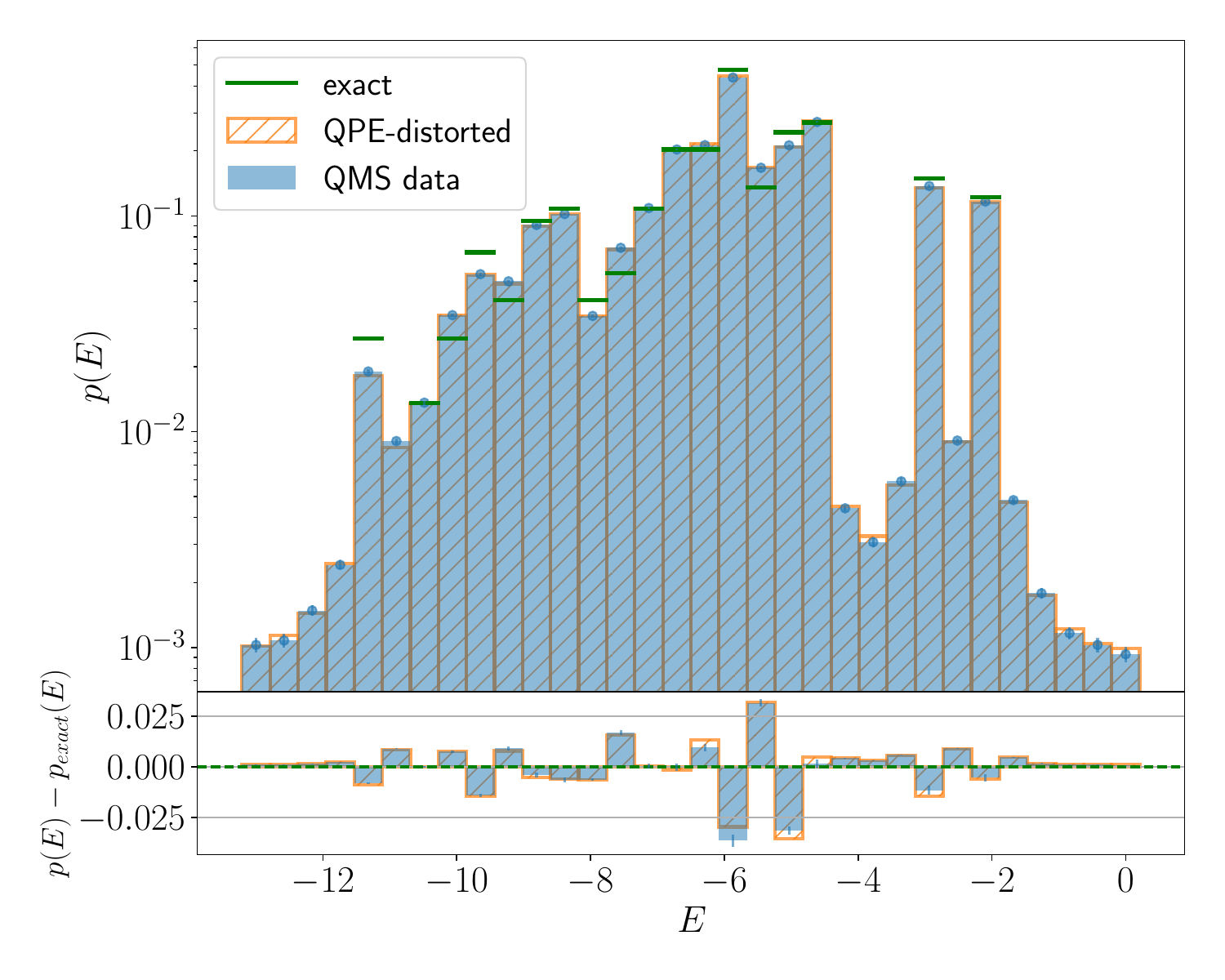}
  \caption{Binned histogram of the energy distribution ($q_e=5$).}
  \label{fig:histo_b0_r1_th50_ne5}
\end{subfigure}\hfill 
\begin{subfigure}{0.475\linewidth}
  \includegraphics[width=0.95\textwidth]{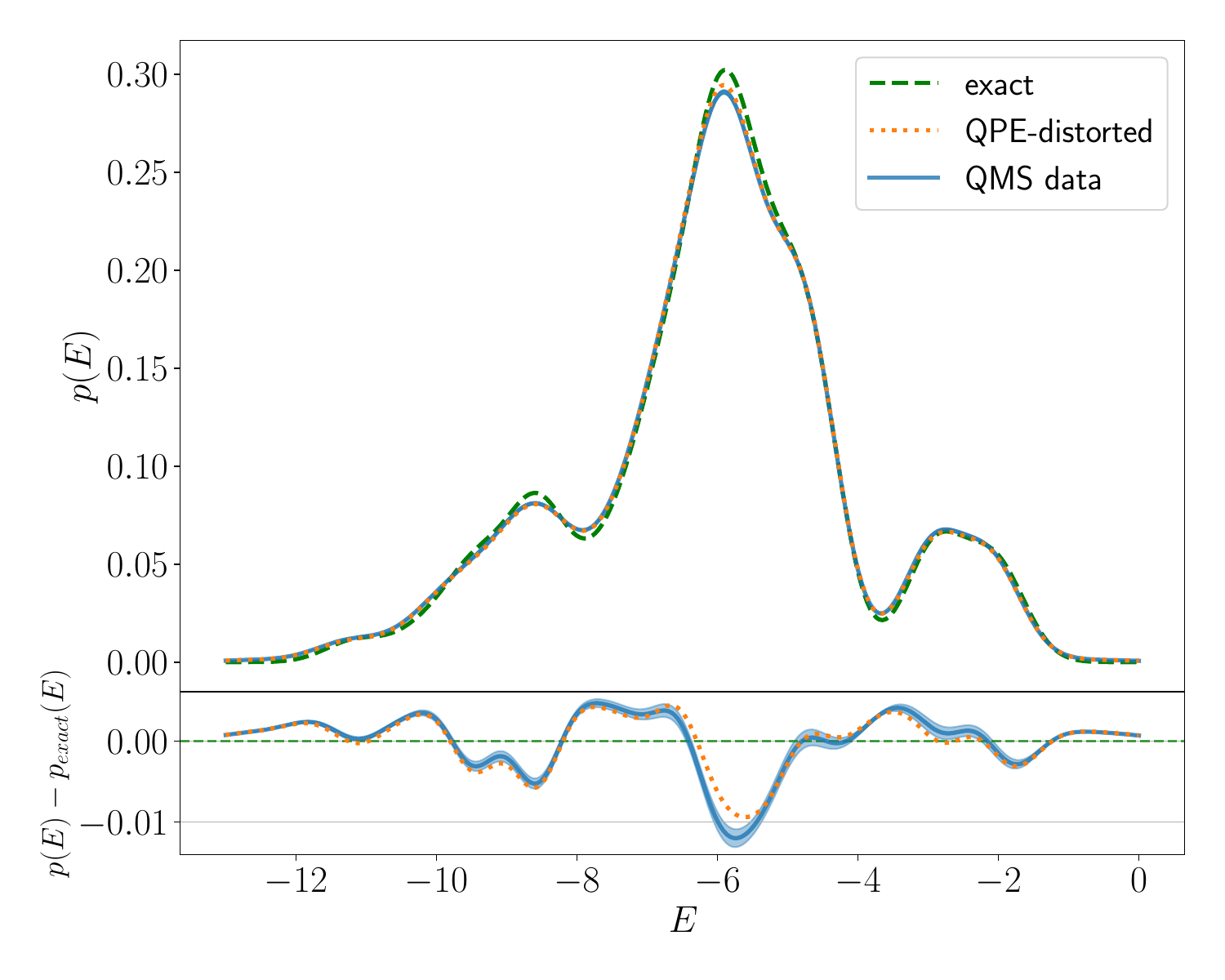}
  \caption{KDE of energy distribution ($q_e=5$, $\sigma_{\text{KDE}}\simeq 0.42$).}
  \label{fig:kde_b0_r1_th50_ne5}
\end{subfigure}

\begin{subfigure}[b]{.475\linewidth}
  \includegraphics[width=0.95\textwidth]{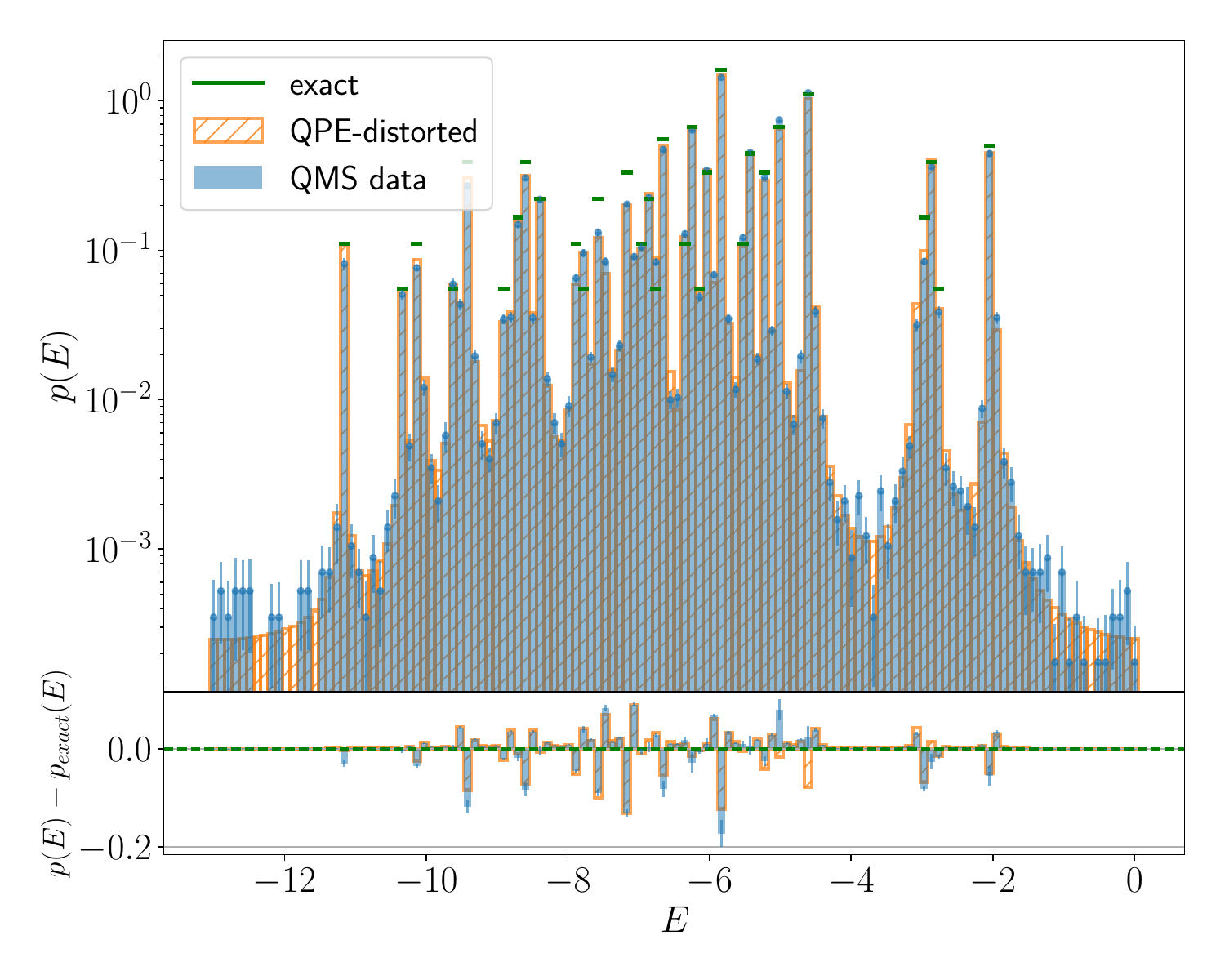}
  \caption{Binned histogram of energy distribution ($q_e=7$).}
  \label{fig:histo_b0_r1_th50_ne7}
\end{subfigure}\hfill
\begin{subfigure}[b]{.475\linewidth}
  \includegraphics[width=0.95\textwidth]{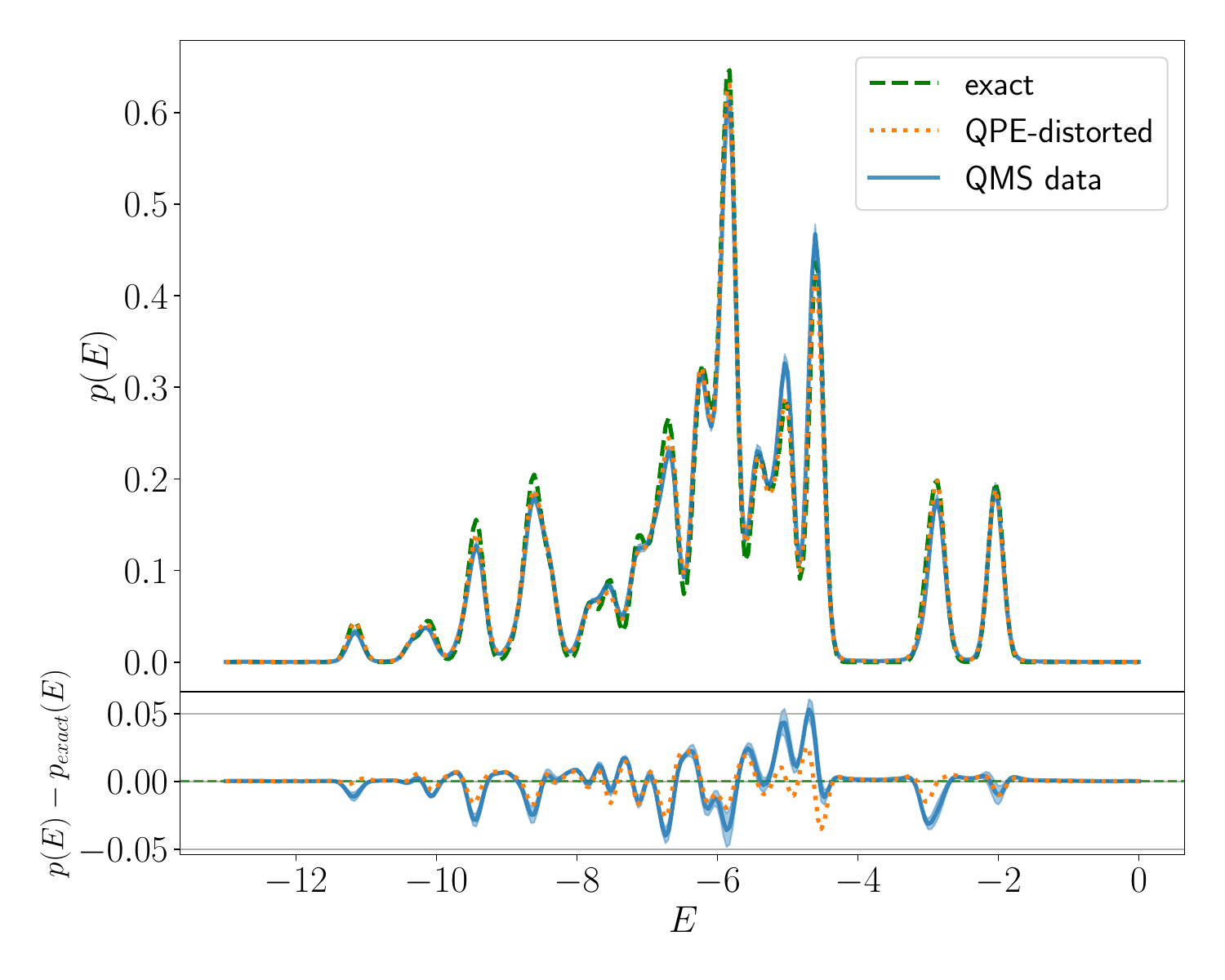}
  \caption{KDE of energy distribution ($q_e=7$, $\sigma_{\text{KDE}} \simeq 0.10$).}
  \label{fig:kde_b0_r1_th50_ne7}
\end{subfigure}

\caption{Histogram and KDE representations of the energy distribution at $\beta=10^{-7}$ for the exact spectrum, the spectrum distorted as expected by QPE (see Eq.~\eqref{eq:QPE-distorsionExpval}) and the data measured via QMS using $q_e = 3$, $5$ and $7$ qubits for the energy register and about 834k, 439k and 56k measurement samples. QMS data has been obtained using 1 rethermalization step (without plaquette measurement) and $50$ thermalization steps, with errors estimated via blocking and bootstrap resampling. 
}
\label{fig:data_b0_r1_th50_ne357}
\end{figure*}

We can investigate more precisely the discrepancy between the exact distribution, the one expected from
the exact one distorted by QPE onto the measurement grid, and the measured data via QMS,
by computing the cumulative distribution. The result of this is shown in Fig.~\ref{fig:data_b0_r1_th50_ne7_KS}
for $q_e=7$ qubits, which is the case that most accurately represents the exact results.
\begin{figure*}
\begin{subfigure}{0.475\linewidth}
  \includegraphics[width=0.95\textwidth]{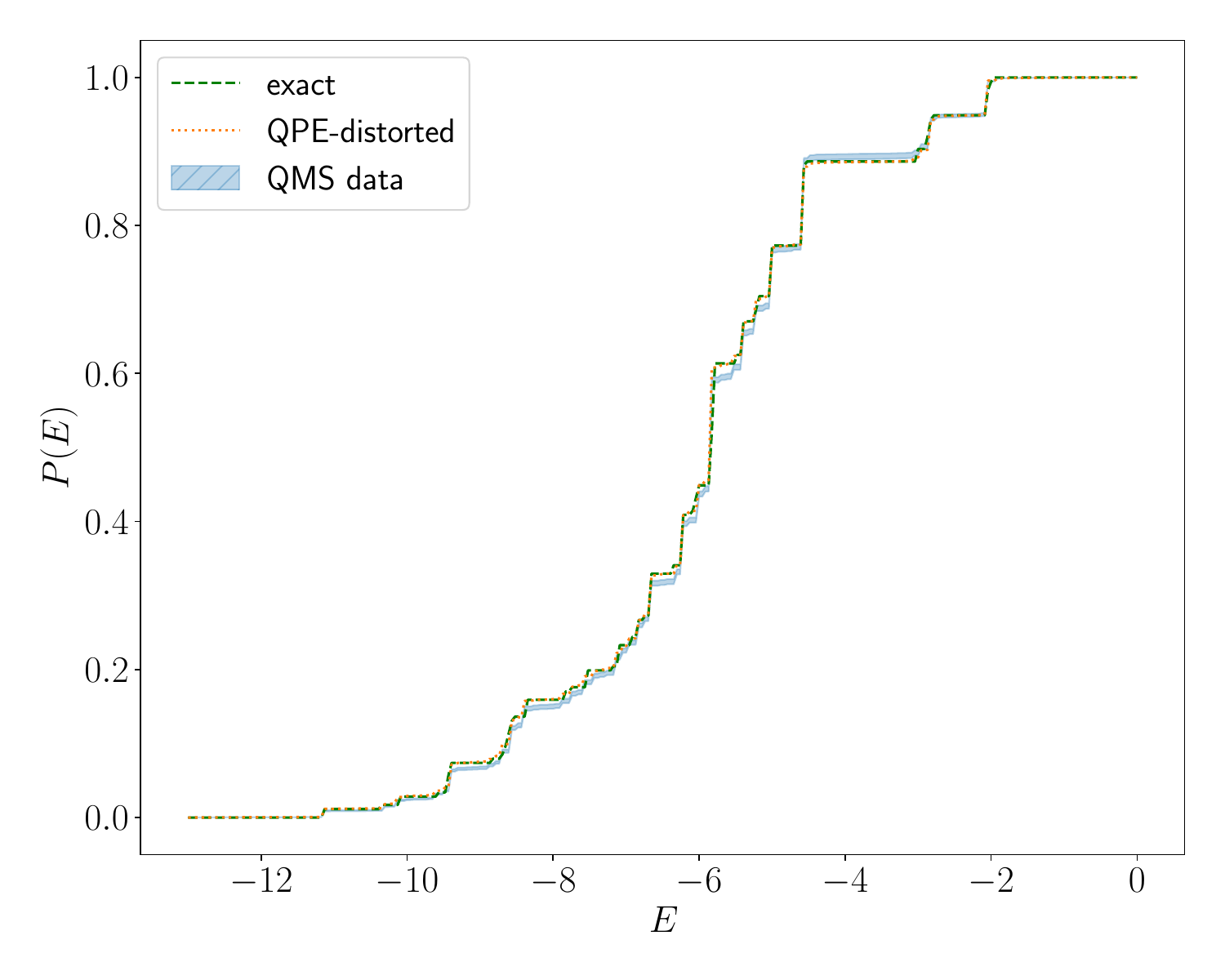}
\caption{Cumulative energy distributions.}
\label{fig:KS_distrib_b0_r1_th50}
\end{subfigure}
\begin{subfigure}{0.475\linewidth}
  \includegraphics[width=0.95\textwidth]{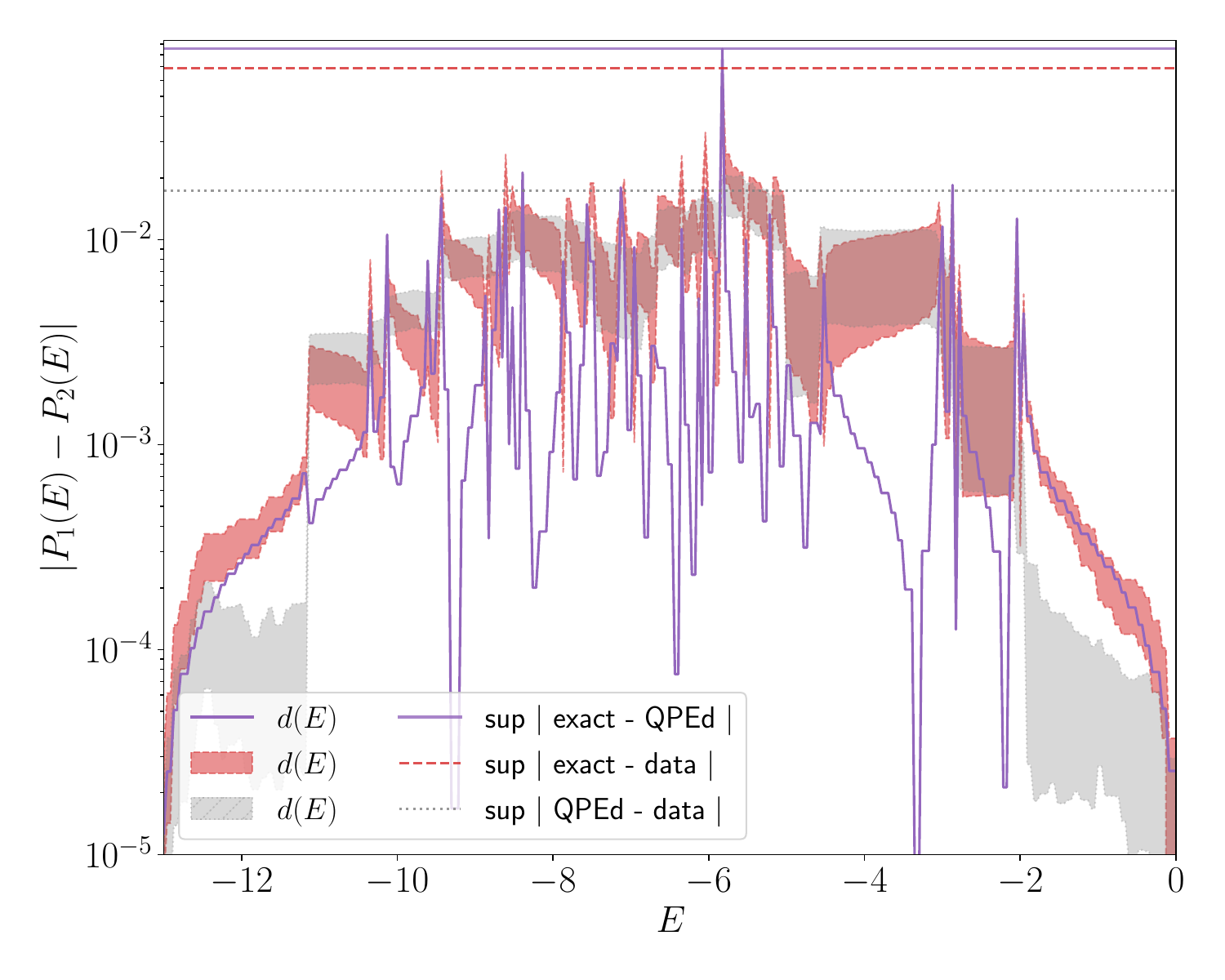}
\caption{Pointwise distance of cumulative energy distributions.}
\label{fig:KS_profile_b0_r1_th50}
\end{subfigure}\hfill
\caption{Cumulative energy distribution and pointwise probability distance at $\beta=10^{-7}$, for the exact spectrum, 
the QPE-distorted spectrum (see Eq.~\eqref{eq:QPE-distorsion}), and for QMS measurements. 
Data has been obtained using $q_e = 7$ qubits for the energy register, with 1 rethermalization step (no plaquette measurement), $50$ thermalization steps,
about 55800 measurement samples with errors estimated via blocking and 100 bootstrap resamples.}
\label{fig:data_b0_r1_th50_ne7_KS}
\end{figure*}

\subsection{Thermal averages at finite temperatures}\label{subsec:numres-betafin}

As done in the previous section for vanishing values of $\beta$, 
here we discuss results at $\beta=0.1$ and $\beta=0.5$.
Fig.~\ref{fig:data_b0.1_r1_th50} and~\ref{fig:data_b0.5_r1_th50}
show the energy distributions in these two cases, 
while Fig.~\ref{fig:EneMeas_allb} reports 
the thermal averages of the energy estimated for all $\beta$ and numbers of qubits considered.
\begin{figure*}
  \includegraphics[width=0.95\textwidth]{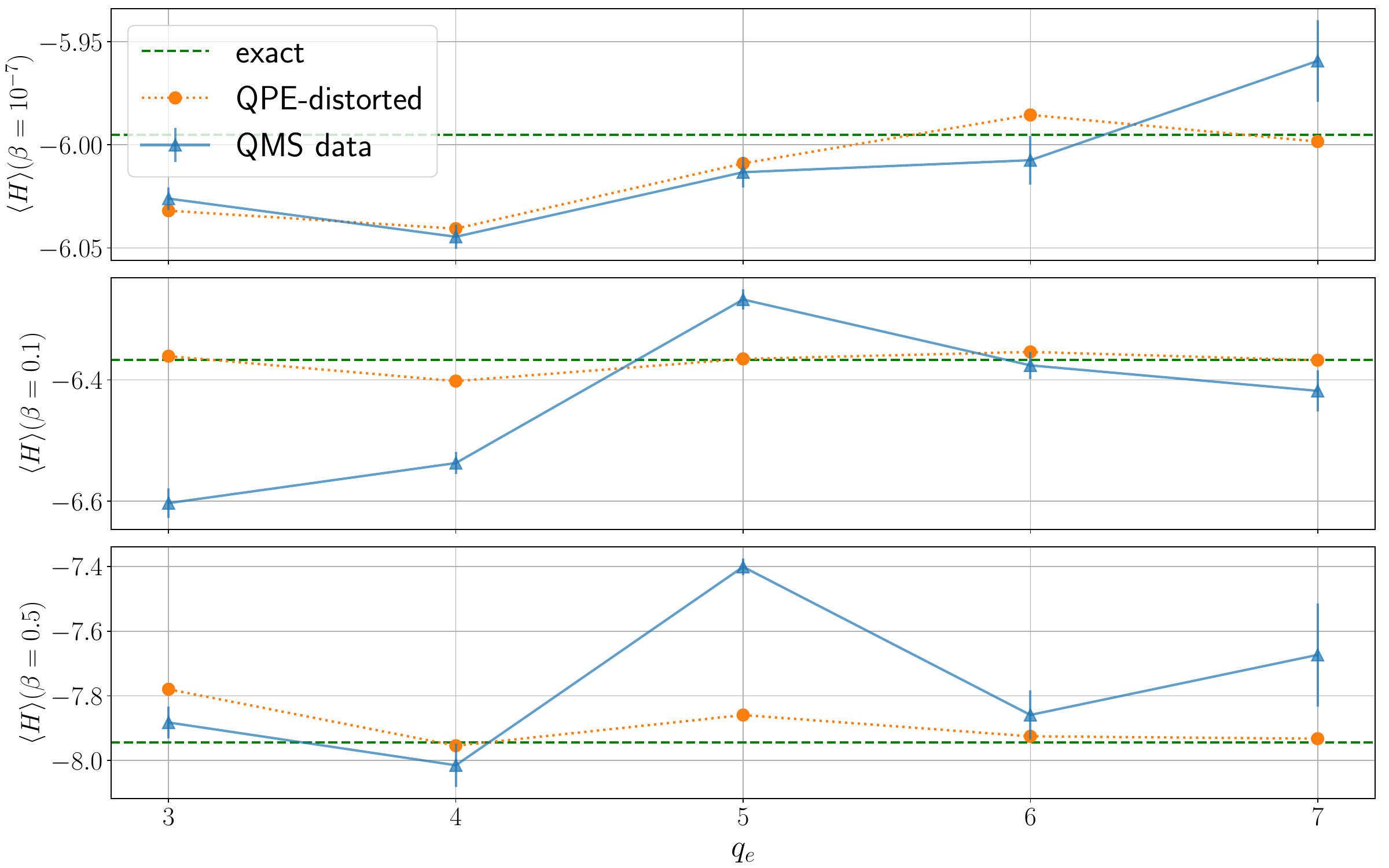}
    \caption{Thermal averages of the energy measured via QMS, and comparison with expected QPE-distorted estimate (see Eq.~\eqref{eq:QPE-distorsionExpval}) and exact value.}
    \label{fig:EneMeas_allb}
\end{figure*}

While the distribution at $\beta=0.1$ is relatively similar to the case of vanishing $\beta$, with the expected effect of the QPE distortion described in Sec.~\ref{subsec:exactqped} matching quite well with QMS measurements,
the behavior of data at $\beta=0.5$ appears worse.
%
\begin{figure*}
\begin{subfigure}{0.475\linewidth}
  \includegraphics[width=0.95\textwidth]{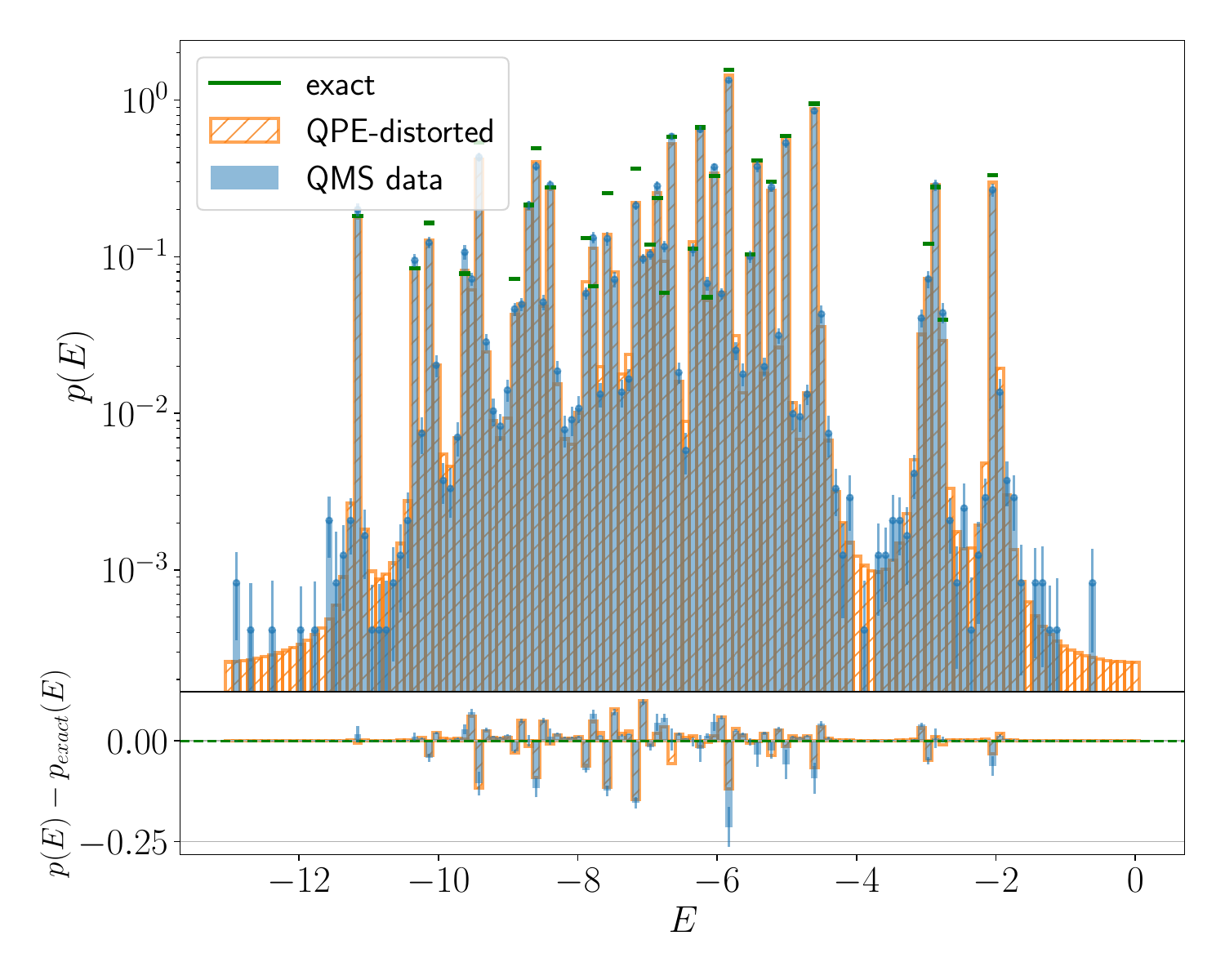}
  \caption{Binned histogram of energy distribution.}
  \label{fig:histo_b0.1_r1_th50}
\end{subfigure}\hfill
\begin{subfigure}{0.475\linewidth}
  \includegraphics[width=0.95\textwidth]{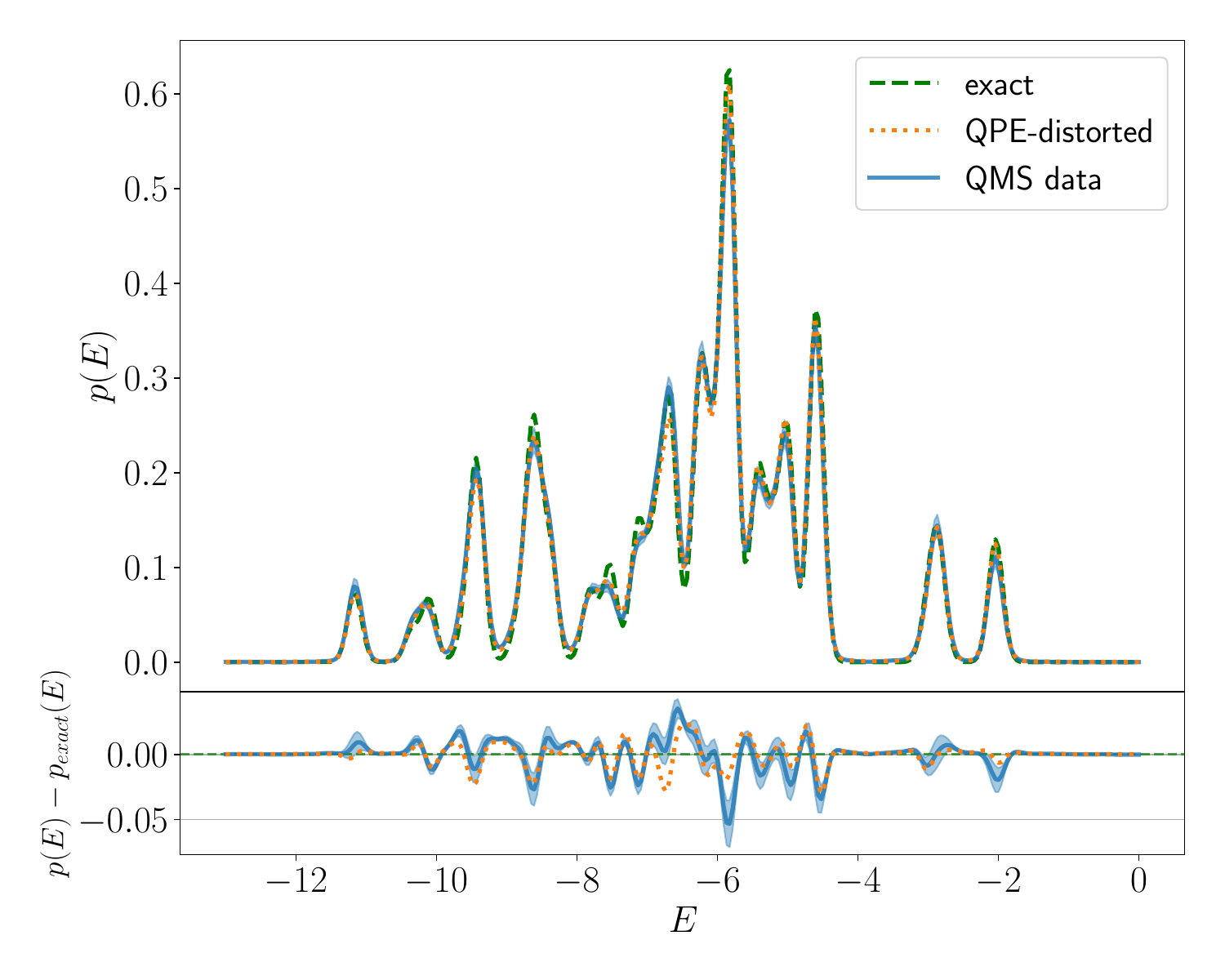}
  \caption{KDE of energy distribution ($\sigma_{\text{KDE}} \simeq 0.10$).}
  \label{fig:kde_b0.1_r1_th50}
\end{subfigure}
\medskip
\begin{subfigure}{0.475\linewidth}
  \includegraphics[width=0.95\textwidth]{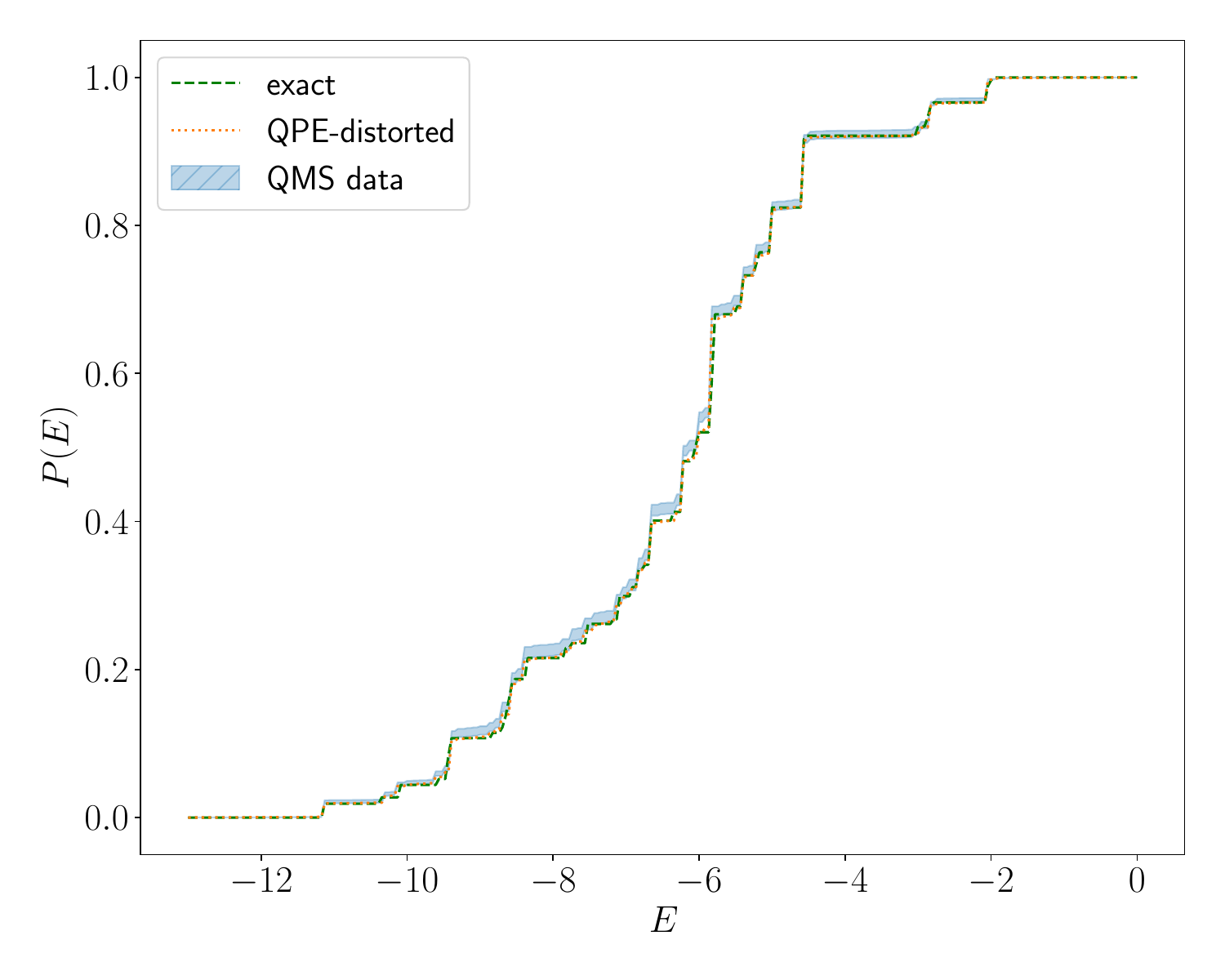}
\caption{Cumulative energy distributions.}
\label{fig:KS_distrib_b0.1_r1_th50}
\end{subfigure}
\begin{subfigure}{0.475\linewidth}
  \includegraphics[width=0.95\textwidth]{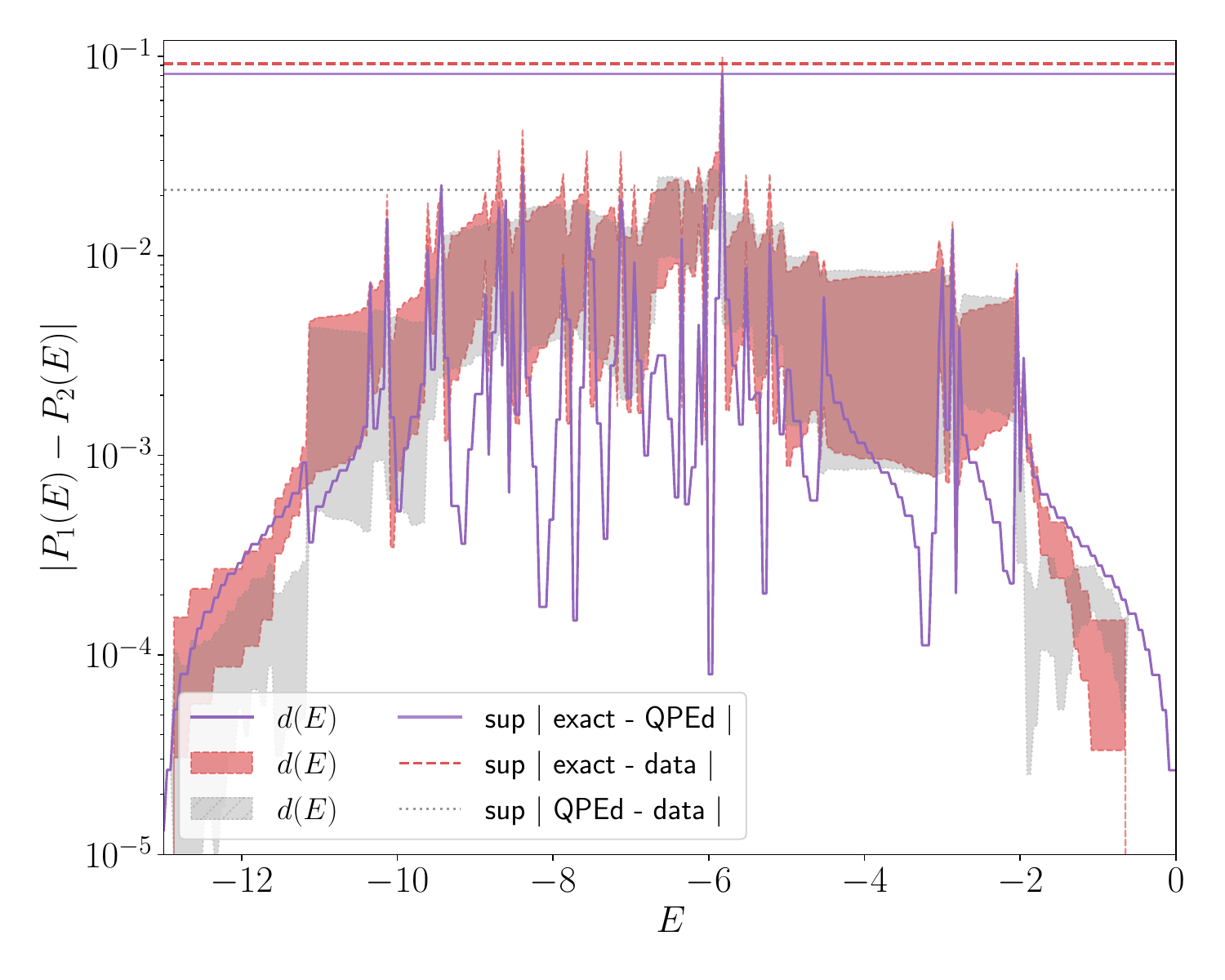}
\caption{Pointwise distance of cumulative energy distributions.}
\label{fig:KS_profile_b0.1_r1_th50}
\end{subfigure}\hfill
\caption{Energy distribution at $\beta=0.1$, for the exact spectrum, the QPE-distorted spectrum (see Eq.~\eqref{eq:QPE-distorsion}), and for QMS measurements. 
Data has been obtained using $q_e = 7$ qubits for the energy register, with 1 rethermalization step (no plaquette measurement), $50$ thermalization steps,
about 23600 measurement samples with errors estimated via blocking and 100 bootstrap resamples.}
\label{fig:data_b0.1_r1_th50}
\end{figure*}
%
\begin{figure*}
\begin{subfigure}{0.475\linewidth}
  \includegraphics[width=0.95\textwidth]{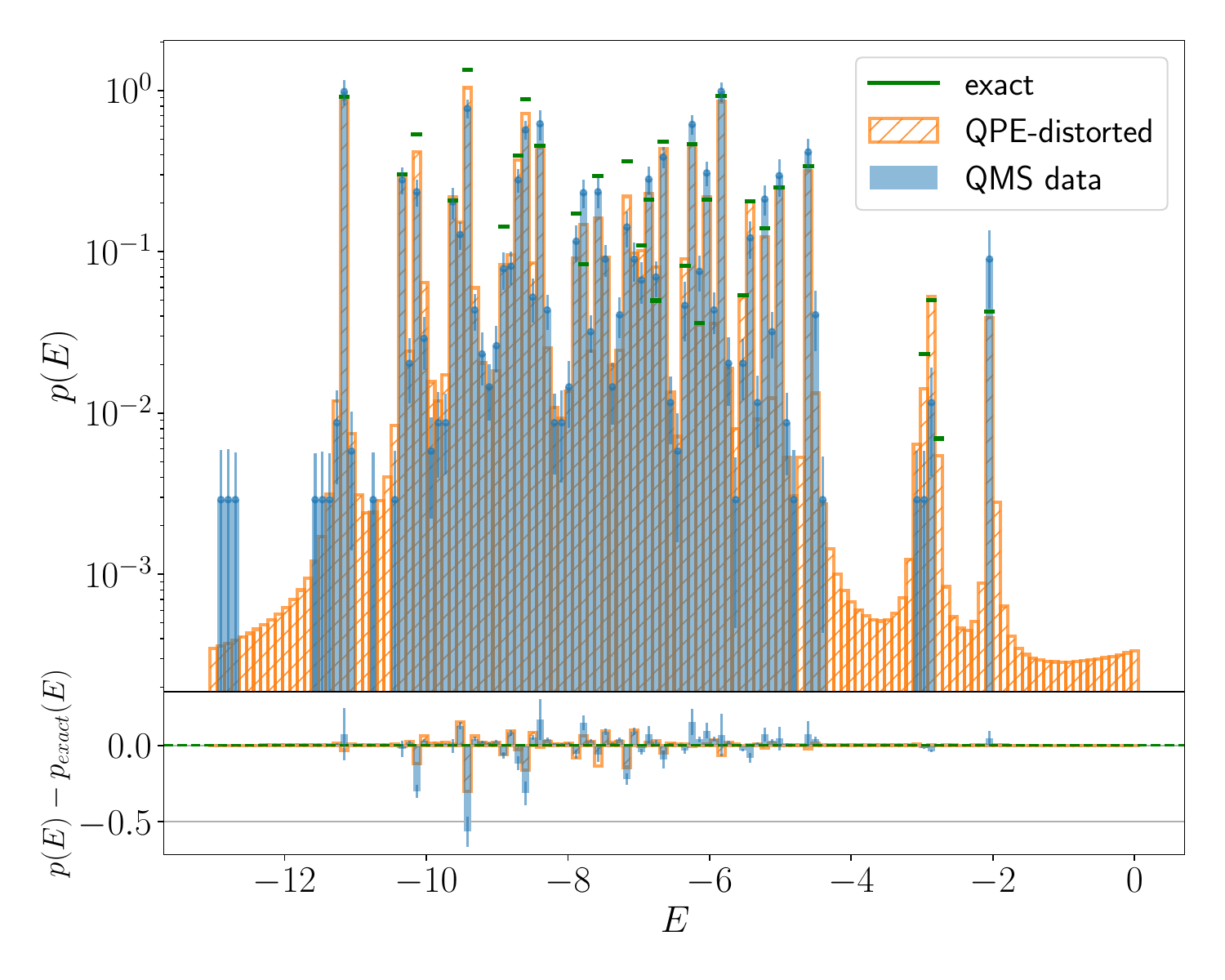}
  \caption{Binned histogram of energy distribution.}
  \label{fig:histo_b0.5_r1_th50}
\end{subfigure}\hfill
\begin{subfigure}{0.475\linewidth}
  \includegraphics[width=0.95\textwidth]{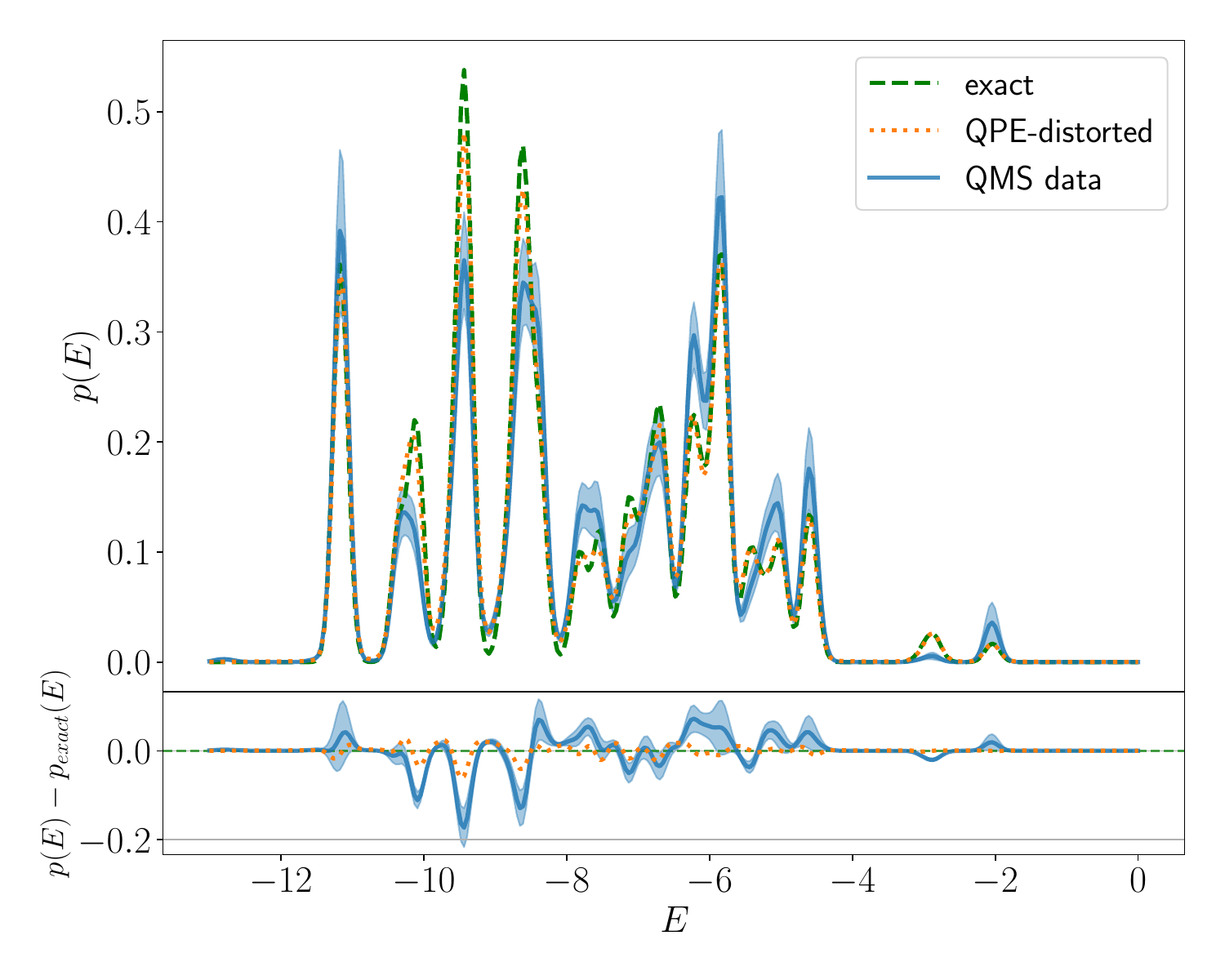}
  \caption{KDE of energy distribution ($\sigma_{\text{KDE}} \simeq 0.10$).}
  \label{fig:kde_b0.5_r1_th50}
\end{subfigure}
\medskip
\begin{subfigure}{0.475\linewidth}
  \includegraphics[width=0.95\textwidth]{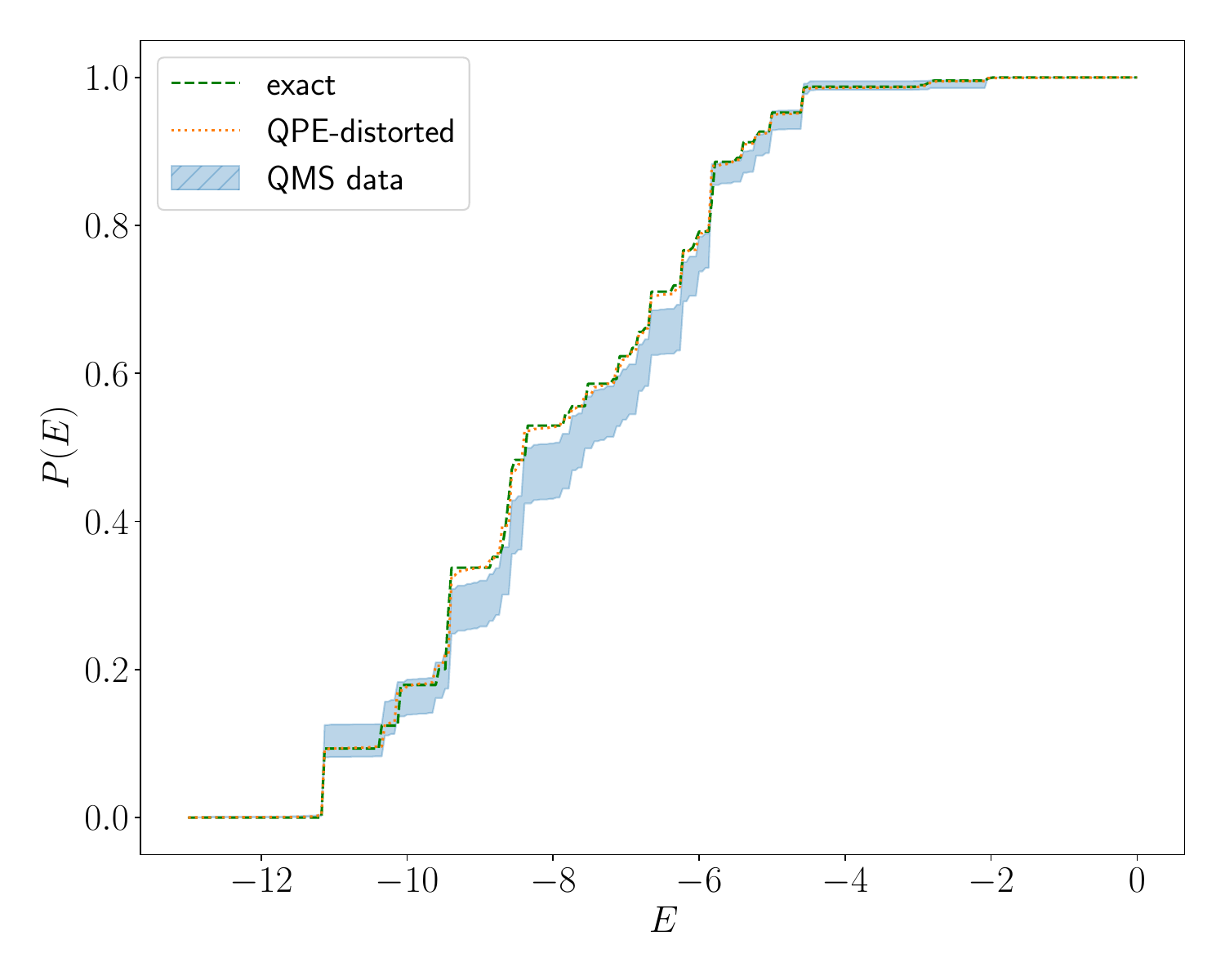}
\caption{Cumulative energy distributions.}
\label{fig:KS_distrib_b0.5_r1_th50}
\end{subfigure}
\begin{subfigure}{0.475\linewidth}
  \includegraphics[width=0.95\textwidth]{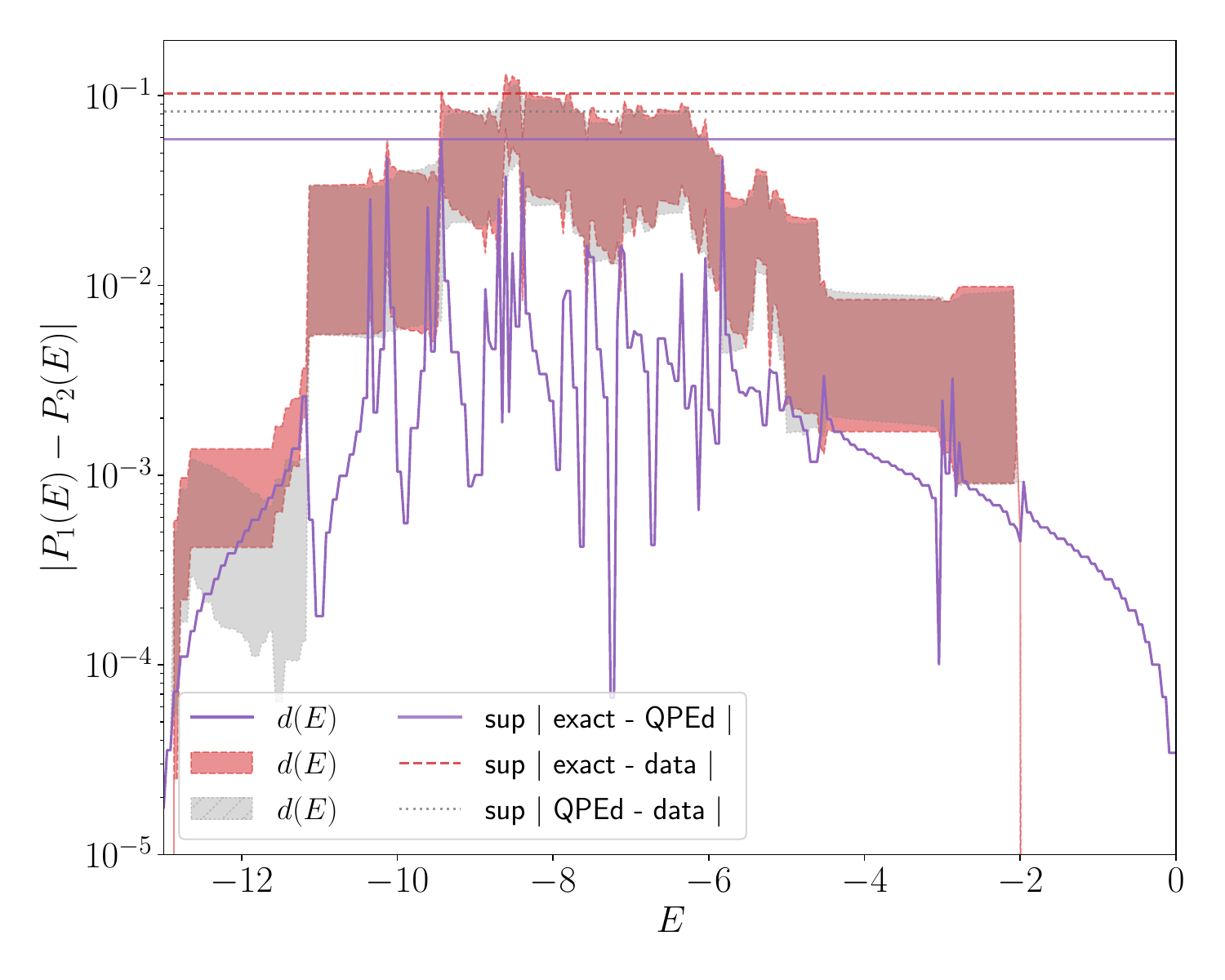}
\caption{Pointwise distance of cumulative energy distributions.}
\label{fig:KS_profile_b0.5_r1_th50}
\end{subfigure}\hfill
\caption{Energy distribution at $\beta=0.5$, for the exact spectrum, the QPE-distorted spectrum (see Eq.~\eqref{eq:QPE-distorsion}), and for QMS measurements. 
Data has been obtained using $q_e = 7$ qubits for the energy register, with 1 rethermalization step (no plaquette measurement), $50$ thermalization steps,
about 3400 measurement samples with errors estimated via blocking and 100 bootstrap resamples.}
\label{fig:data_b0.5_r1_th50}
\end{figure*}
\begin{figure*}
  \includegraphics[width=0.95\textwidth]{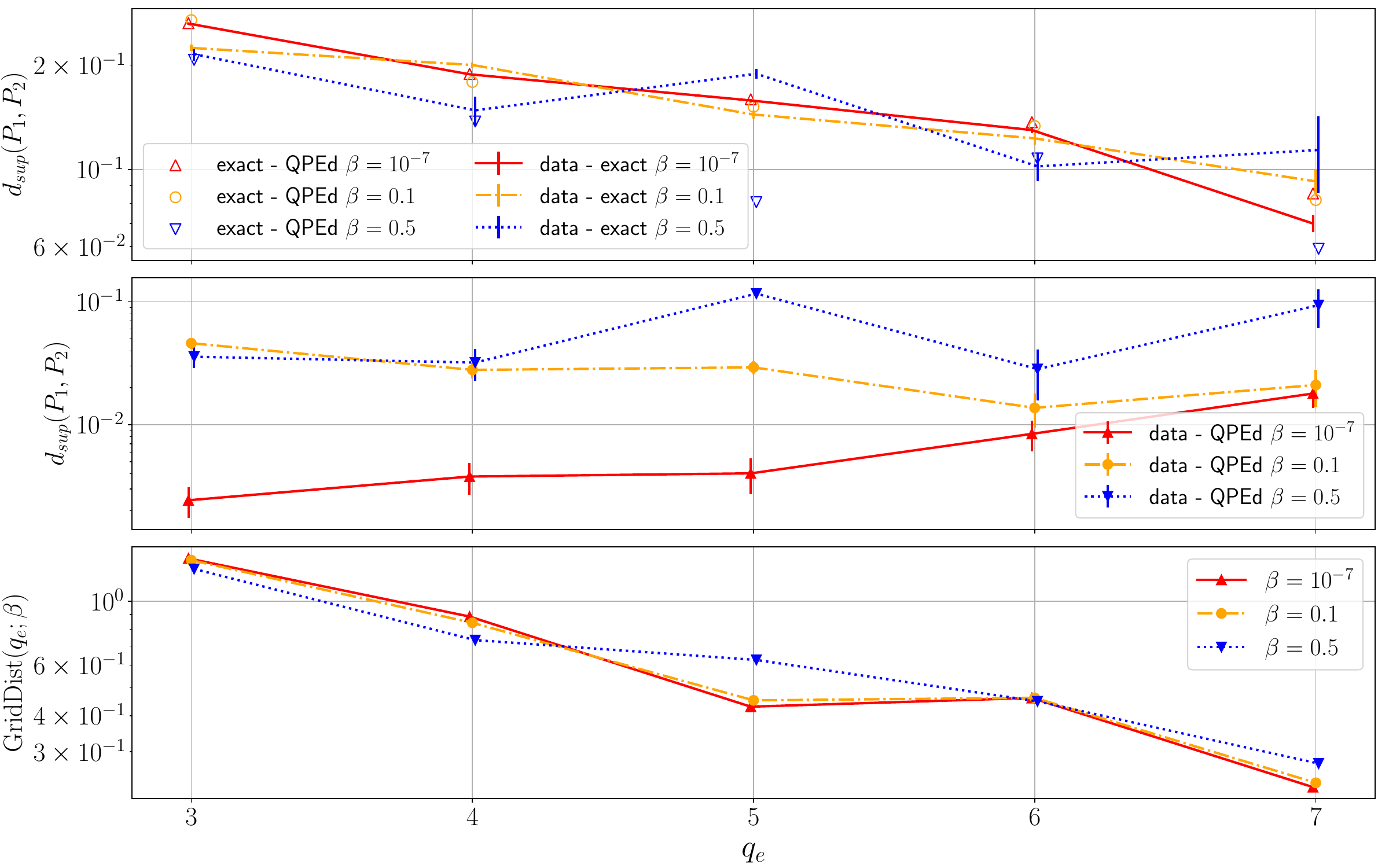}
\caption{Accuracy of distributions as maximum of the pointwise distance between cumulative distributions for the exact spectrum, the QPE-distorted spectrum and QMS data (top and middle panels), 
and $\mathrm{GridDist}$ estimate (defined in Eq.~\eqref{eq:GridDist_def}) between exact spectrum and QPE grid (bottom panel) as a function of the number of qubits $q_e$ in the energy register (with fixed measurement range $[-13,0]$) and for different values of inverse temperature $\beta$.
A small offset in the horizontal axis for different values of $\beta$ has been added to improve readability.}
\label{fig:KS_dist_r1_th50_GridDist}
\end{figure*}
The top and middle panels of Figure~\ref{fig:KS_dist_r1_th50_GridDist} 
show the behavior of the quantity
\begin{align}\label{eq:distsup_distr}
    d_{sup}(P_1,P_2) \equiv \sup_E \lvert P_1(E)- P_2(E)\rvert,
\end{align}
with $P_1$ and $P_2$ two distinct cumulative distributions.
While Fig.~\ref{fig:KS_profile_b0_r1_th50}, Fig.~\ref{fig:KS_profile_b0.1_r1_th50}, and Fig.~\ref{fig:KS_profile_b0.5_r1_th50}
give a detailed view of the relative discrepancies between the three distributions, the distance expressed in Eq.~\eqref{eq:distsup_distr} (which is the same used in Kolmogorov--Smirnov tests) puts a strong bound on the quality of convergence between the three kinds of distributions we consider, namely the one of the exact (physical) spectrum, of its QPE-distorted counterpart, and of the QMS energy measurements. 
The general behavior in the top and middle panels of Figure~\ref{fig:KS_dist_r1_th50_GridDist} seems to be consistent with the expected QPE distortion, 
discussed in Sec.~\ref{subsec:exactqped}, 
with a systematic error of QMS data which \textit{tends} to decrease by increasing the energy resolution (i.e., $q_e$),
even if not always in a monotonical fashion.
A particular exception to this is observed for the point at $\beta=0.5$ and $q_e=5$. Our tentative explanation is the following.
With fixed extrema of the grid and $\beta$, incrementing the number of grid sites might allow some of the eigenvalues of the real spectrum to be located in the middle between two sites of the grid, leaking a contribution to both, and making a measurement on one of the two neighboring grid points to set the state in a superposition of eigenstates with relatively similar energy but not exactly in the same eigenspace. The argument made in Sec.~\ref{subsec:exactqped} 
to account for the QPE distortion, which assumes an energy distribution coming from a stationary chain with distorted weights but still made of exact eigenstates, 
does not hold anymore, and this explains why data in this particular case do not follow the 
expected QPE-distorted distribution as well as 
for other numbers of qubits $q_e$ for the energy register. 
This effect is similar to what is experienced with floating point artifacts in classical computing, and it is expected to vanish (in general, non-monotonically) by increasing $q_e$. 
At the same time, even using the fixed grid with $q_e=5$, 
this effect shows up in particular when increasing $\beta$ from $0.1$ to $0.5$.
Therefore, the weights $p_k^{(QPEd)}$ used in Eq.~\eqref{eq:QPE-distorsion} 
to predict the effects of the QPE distortion are not accurate enough.
In order to get an intuitive picture of the reasons for this effect,
we introduce a quantity that assesses the average weighted distance between
the QPE grid and the spectrum.
Let us consider a real spectrum $\Sigma=\{\sigma_k\}$, with weights $p_k$ and a uniform grid \mbox{${\{x_j=a+\frac{(b-a)}{(2^{q}-1)} j\}}_{j=0}^{q}$} in the range $[a,b]$. 
We can define a measure of the relative QPE-weighted distance between the spectrum and the grid as the following quantity:
\begin{align}
    \mathrm{GridDist}&(q;\beta)
    \equiv \sum_{\sigma_k\in \Sigma} p_k(\beta) \sqrt{\sum\limits_{j=0}^{2^q-1} {\lvert \sigma_k - x_j \rvert}^2 \cdot {|c_{k,j}|}^2},
    \label{eq:GridDist_def}
\end{align}
where the weights ${|c_{k,j}|}^2$ depend on both real spectrum and grid according to Eq.~\eqref{eq:QPEcoeffs}.
The bottom panel in Figure~\ref{fig:KS_dist_r1_th50_GridDist} shows how the particular 
anomalous point discussed above is affected by a higher relative 
weighted distance than the other points, 
which would generally be expected to follow 
a monotonically decreasing trend. 
Since the quantity $\mathrm{GridDist}$ correlates well with 
the behavior observed in QMS data 
(top panel in Figure~\ref{fig:KS_dist_r1_th50_GridDist}), 
it appears reasonable to assume that the source of QPE distortion 
is not completely predicted by Eq.~\eqref{eq:QPE-distorsion}, 
which represents only a rough approximation for $\beta > 0$.
Furthermore, this effect is amplified by the fact that the spectrum 
studied here is not quasi-continuous, 
due to the small volume of the lattice and the 
finiteness of the gauge group.

\subsection{Gauge invariant measurement}\label{subsec:numres-plaqmeas}
As discussed in Sec.~\ref{subsec:retherm-GImeas}, in order to perform a rethermalization step instead of thermalization, it is necessary to keep the state gauge invariant also during the measurement of an observable not commuting with the Hamiltonian. 
In the case of the thermal average of the (trace of the) left plaquette operator, a measurement with the protocol discussed in the caption to Fig.~\ref{fig:GImeasD4} yields only three possible values, namely $-2$, $0$, and $2$. 
Since both data and exact distribution lie on that discrete domain, 
it does not make sense to compare them using histograms or KDE. 
Instead, we report the distribution in Table~\ref{tab:plaqMeas_b0} for $\beta=10^{-7}$, Table~\ref{tab:plaqMeas_b0.1} for $\beta=0.1$ and Table~\ref{tab:plaqMeas_b0.5} for $\beta=0.5$.
\begin{table}[h]
    \centering
    \begin{tabular}{c | c c c | c}
        $q_e$ & $p_{-2}$ & $p_{0}$ & $p_{2}$ & $\langle{\Tr\mathcal{P}_{0y}}\rangle$ \\
       \hline
       $3$ & $0.159(2)$ & $0.685(3)$ & $0.157(2)$ & $-0.004(6)$\\
       $5$ & $0.159(1)$ & $0.680(2)$ & $0.160(1)$ & $-0.002(3)$\\
       $7$ & $0.160(3)$ & $0.681(4)$ & $0.159(3)$ & $-0.004(8)$\\
       \hline
       exact & $0.15909$ & $0.68182$ & $0.15909$ & $0.0$
    \end{tabular}
    \caption{Distribution of trace of plaquette at $\beta=10^{-7}$.
            Thermalization steps: 50, Rethermalization steps: 20, blocksize: 50;
            The number of samples collected for $q_e=3$, $q_e=5$, and $q_e=7$ are respectively about 38600, 122000, and 17900.}
    \label{tab:plaqMeas_b0}
\end{table}
\begin{table}[h]
    \centering
    \begin{tabular}{c | c c c | c}
        $q_e$ & $p_{-2}$ & $p_{0}$ & $p_{2}$ & $\langle{\Tr\mathcal{P}_{0y}}\rangle$ \\
       \hline
       $3$ & $0.132(4)$ & $0.670(6)$ & $0.199(5)$ & $0.133(15)$\\
       $5$ & $0.130(3)$ & $0.679(5)$ & $0.190(1)$ & $0.123(9)$\\
       $7$ & $0.131(8)$ & $0.676(15)$ & $0.193(10)$ & $0.124(28)$\\
       \hline
       exact & $0.12331$ & $0.67295$ & $0.20374$ & $0.0536$
    \end{tabular}
    \caption{Distribution of trace of plaquette at $\beta=0.1$.
            Thermalization steps: 50, Rethermalization steps: 20, blocksize: 50;
            The number of samples collected for $q_e=3$, $q_e=5$, and $q_e=7$ are respectively about 6200, 16600, and 2400.}
    \label{tab:plaqMeas_b0.1}
\end{table}
\begin{table}[h]
    \centering
    \begin{tabular}{c | c c c | c}
        $q_e$ & $p_{-2}$ & $p_{0}$ & $p_{2}$ & $\langle{\Tr\mathcal{P}_{0y}}\rangle$ \\
       \hline
       $3$ & $0.061(8)$ & $0.52(2)$ & $0.42(2)$ & $0.71(4)$\\
       $5$ & $0.075(6)$ & $0.59(1)$ & $0.34(1)$ & $0.53(3)$\\
       $7$ & $0.049(16)$ & $0.53(4)$ & $0.42(4)$ & $0.7(1)$\\
       \hline
       exact & $0.04349$ & $0.49712$ & $0.45940$ & $0.27727$
    \end{tabular}
    \caption{Distribution of trace of plaquette at $\beta=0.5$.
            Thermalization steps: 50, Rethermalization steps: 20, blocksize: 10;
            The number of samples collected for $q_e=3$, $q_e=5$, 
            and $q_e=7$ are respectively about 850, 1500, and 123.}
    \label{tab:plaqMeas_b0.5}
\end{table}
The presence of a finite resolution for energy measurements (i.e., the QPE grid),
discussed in Sec.~\ref{subsec:exactqped} and manifest 
in the results of Sec.~\ref{subsec:numres-betafin}, 
does not only affect the energy distribution measured with QMS, 
but reflects also on the measured distribution of other observables such as the average plaquette.
For this reason, the degradation in the quality of the results at higher values of $\beta$ is
all the more manifest in this case.
In general, increasing the number of qubits $q_e$ would improve the quality of the results,
even if not in a monotonic fashion (see discussion in Sec.~\ref{subsec:numres-betafin}), 
but the lower acceptance probability and the slower speed of emulation  
make it difficult to obtain a better estimate of these effects 
for an asymptotically large number of qubits.

\section{Conclusions}\label{sec:conclusions}
To summarize, this paper treats the problem of studying lattice gauge theories
with the Quantum Metropolis Sampling algorithm, 
discussing in particular the implementation for 
a (2+1)-dimensional lattice gauge theory with finite gauge group $D_4$. 
The challenges we encountered and tackled along the process include the following:
\begin{itemize}
    \item determining how to build a set of Metropolis quantum updates (i.e., unitary operators) 
    which preserves both gauge invariance of the states and is ergodic on the physical Hilbert 
    space of the system (in the sense discussed in Sec.~\ref{subsec:GImoves});
    \item building a protocol to perform measurements of physical observables without breaking the 
    gauge invariance of the state after measurement;
    \item taking into account the distortion introduced by a finite energy resolution 
    for Quantum Phase Estimation (QPE) to predict the expected energy distribution of the
    QMS measurements.
\end{itemize}
The numerical results for $\beta=0$, presented in Sec.~\ref{subsec:numres-beta0}, demonstrate that 
the conditions of ergodicity and gauge invariance are satisfied, 
and the general behavior, while presenting some discrepancy with the exact diagonalization 
results, show an excellent agreement with the expectations coming from taking into account
the effects of QPE distortion discussed in Sec.~\ref{subsec:exactqped}.
For higher values of $\beta$, the sampling becomes less efficient because of a generally smaller
acceptance rate, and the sampled distribution exhibits an even higher distortion as shown from
the results in Sec.~\ref{subsec:numres-betafin}. We provide a tentative explanation of
the reasons for these discrepancies by introducing a quantity ($\mathrm{GridDist}$) 
which quantifies heuristically the Gibbs-weighted average square distance between 
the QPE grid on which energy measurements are performed, and the actual spectrum of the system.
We show that this quantity correlates well with the source of distortion,
which, in general, does not follow a monotonically decreasing trend of the systematic error.
This effect, due to a mismatch between the real spectrum and the QPE grid, has been also observed in Ref.~\cite{Aiudi:2023cyq}.

As future perspectives, it would be interesting to investigate systematically how the choice of different sets of moves affects efficiency.
Furthermore, one can introduce fermionic matter or a topological theta term
on similar gauge systems with the application of other 
algorithms of thermal average estimation proposed in literature.
Indeed, since the gauge adaptations we made to the QMS algorithm 
require ideas applied to several quantum components 
besides the already well-known time evolution for $D_4$ 
(used to define the Quantum Phase Estimation, Sec.~\ref{subsec:exactqped}),
we believe that the results and ideas we considered in this work might be useful 
also for different quantum algorithms meant to be applied to general lattice gauge theories.

\acknowledgments
We thank Claudio Bonati and Phillip Hauke for useful discussions.
GC thanks the INFN of Pisa for the hospitality while writing this manuscript.
EB thanks funding by the European Union under Horizon Europe Programme – Grant Agreement 101080086 – NeQST. 
KZ thanks funding by the University of Pisa under the “PRA - Progetti di Ricerca di Ateneo” (Institutional Research Grants) - Project No. PRA 2020-2021 92 “Quantum Computing, Technologies and Applications”.
MD acknowledges support from the National Centre on HPC, Big Data and Quantum Computing - SPOKE 10 (Quantum Computing) and received funding from the European Union Next-GenerationEU - National Recovery and Resilience Plan (NRRP) – MISSION 4 COMPONENT 2, INVESTMENT N. 1.4 – CUP N. I53C22000690001.
Views and opinions expressed are however those
of the author(s) only and do not necessarily reflect those of the European Union or European Climate, Infrastructure and Environment Executive Agency (CINEA). 
Neither the European Union nor the granting authority can be held responsible for them.
This project has received funding from the European Research Council (ERC) under the European Union’s Horizon 2020 research and innovation programme (grant agreement No 804305).
Numerical simulations have been performed on the \texttt{Marconi100} machines at CINECA, based on 
the agreement between INFN and CINECA, under project INF23\_npqcd.

\appendix

\section{Overview of lattice gauge theories with finite groups and matching between transfer matrix formulations}\label{app:transfmat_gropth}
Here we briefly review some concepts of gauge theories with finite gauge groups.
In particular,  we establish the precise connection between the Hamiltonian and transfer matrix
formulation by discussing the match between Casimir coefficients for the
kinetic part of the Hamiltonian (which can be obtained also from group-theoretical considerations, i.e., see~\cite{Mariani:2023eix}),
and the transfer matrix approach based on the Euclidean Lagrangian formulation.
The relation between real space (link basis) and irrep basis is the following~\cite{Mariani:2023eix}:
\begin{align}
   \braket{g}{\widetilde{j\alpha \beta}} = \sqrt{\frac{d_j}{|G|}} {\rho_j(g)}^{\alpha}_\beta, 
\end{align}
where $j$ labels the unitary irreps $\rho_j: G \to U(d_j)$ with dimension $d_j$,
and we denote by $\ket{\widetilde{j \alpha \beta}}$ the irreps basis to distinguish it from the real space basis.
Notice that there are two indexes for each irrep $\alpha$ $\beta$, labeling the matrix element, and that $\sum_j d_j^2 = |G|$ (a sum over $j$ will always denote a summation over all irreps). 
For the \emph{Peter--Weyl theorem}, the irreps form then a complete basis that can be considered as an analog of the Fourier basis. Let us introduce also the projectors into irrep spaces:
\begin{align}\label{eq:irrepProj}
    \mathbb{P}_j \equiv \sum_{\alpha \beta} \ketbra{\widetilde{j \alpha \beta}}{\widetilde{j \alpha \beta}},
\end{align}
which, by Peter-Weyl and the completeness relation for irreps, 
form a partition of identity $\sum_j \mathbb{P}_j=\matone$ (and also, satisfy the relations $\mathbb{P}_j \mathbb{P}_{j^\prime}=\delta_{j j^\prime} \mathbb{P}_j$).

Using $\hat{L}_{[g]}\equiv \sum_{\tilde{g}} \ketbra{\tilde{g}}{g \tilde{g}}$ and $\hat{R}_{[g]}\equiv \sum_{\tilde{g}} \ketbra{\tilde{g}}{\tilde{g} g^\dagger}$,
the form of the projectors in real space is the following:
\begin{align}
    \mathbb{P}_j &= \sum_{\alpha, \beta, g, g^\prime} \frac{d_j}{|G|} {\rho_j(g^\prime)}_{\alpha \beta}{{\rho_j(g)}^\dagger}_{\beta \alpha} \ketbra{g^\prime}{g} 
    =\frac{d_j}{|G|} \sum_h \chi_j(h) \hat{R}^\dagger_{[h]},
\end{align}
where $\chi_j$ is the character of the $j$-th irrep.
In terms of irreps, the Right and Left multiplication operators take the form
\begin{align}
    \hat{L}_{[g]} &= \sum_{g^\prime} \ketbra{g^\prime}{g g^\prime}
    =\sum_{j,\alpha^\prime,\alpha,\beta} {[\rho_{j}(g)]}_{\alpha,\alpha^\prime} \ketbra{\widetilde{j \alpha^\prime \beta}}{\widetilde{j \alpha \beta}},\\
    \hat{R}_{[g]} &= \sum_{g^\prime} \ketbra{g^\prime}{g^\prime g^\dagger}
    =\sum_{j,\alpha,\beta^\prime,\beta} {[\rho_{j}(g)]}_{\beta,\beta^\prime} \ketbra{\widetilde{j \alpha \beta^\prime}}{\widetilde{j \alpha \beta}},
\end{align}
where we used the orthogonality relations: 
\begin{align}
    \sum_g {[\rho_{j^\prime}(g)]}_{\alpha^\prime,\beta^\prime} {[{\rho_{j}(g)}^*]}_{\alpha,\beta}=\frac{|G|}{d_j}\delta_{j^\prime,j}\delta_{\alpha^\prime, \alpha}\delta_{\beta^\prime \beta}.
\end{align}
With some more effort, also the link operator (non-Hermitian) can be written in
terms of the irreps.
Indeed, using the Clebsch-Gordan coefficients for the group, 
which are defined via the relation
\begin{align}\label{eq:D4CGcoeffs}
    {\rho_{j^\prime}(g)}_{\alpha^\prime, \beta^\prime}^*
    {\rho_{j^{\prime\prime}}(g)}_{\alpha^{\prime\prime}, \beta^{\prime\prime}} = 
    \sum_{J,A,B} C^{(J,A,B)}_{(j^\prime, \alpha^\prime, \beta^\prime); (j^{\prime\prime}, \alpha^{\prime\prime}, \beta^{\prime\prime})} {\rho_{J}(g)}_{A,B}^*.
\end{align}
one can write
\begin{equation}
\begin{aligned}
    \hat{U}_{\alpha,\beta} =\! \sum_{g\in G} {\rho_f(g)}_{\alpha, \beta} \ketbra{g}{g}
    \!=\!\! \sum\limits_{\vec{J}', \vec{J}''}\!\!\!\!
    \frac{\sqrt{d_{J'}d_{J''}}}{d_f}
    C^{(f,\alpha,\beta)}_{\vec{J}';\vec{J}''} \ketbra{\widetilde{\vec{J}'}}{\widetilde{\vec{J}''}},
\end{aligned}
\end{equation}
where we use the shorthand $\vec{J}'\equiv (j',\alpha',\beta')$ and $\vec{J}''\equiv (j'',\alpha'',\beta'')$,
which identifies different matrix elements for each irrep with a single multi-index.

For a projector $\mathbb{P}$, we have $e^{\alpha \mathbb{P}} = \matone + (e^\alpha -1)\mathbb{P}= (\matone-\mathbb{P}) + e^\alpha \mathbb{P}$, therefore, for a partition of unity set of projectors 
$\{\mathbb{P}_j\}_{j}$, such that $\oplus_j \mathbb{P}_j = \matone$ and 
$\mathbb{P}_j \mathbb{P}_{j^\prime} = \mathbb{0}$, there is a group-theoretical (GT)
motivated expression for the electric/kinetic term of the Hamiltonian based on the Casimir operator as a sum over irreps and constant inside conjugacy classes:
\begin{align}\label{eq:appA-HGT}
    \hat{H}_k^{(GT)} = \alpha \sum_j f_j \mathbb{P}_j. 
\end{align}
Indeed, as discussed in~\cite{Mariani:2023eix}, 
this is the most general Hamiltonian that can be used 
to describe a gauge invariant theory.
The transfer matrix corresponding to a finite (Euclidean) time $\Delta t=1$ integration with the Hamiltonian~\eqref{eq:appA-HGT} becomes then:
\begin{align}\label{eq:appA-Tk_GT}
    \hat{T}_k^{(GT)}&\equiv e^{-\alpha \sum_j f_j \mathbb{P}_j} 
    =\sum_j e^{-\alpha f_j} \mathbb{P}_j\\
    &
    =  \sum_h \Big[\sum_j\frac{d_j}{|G|} e^{-\alpha f_j} \chi_j(h)\Big] \hat{R}^\dagger_{[h]} .
\end{align}

The (Euclidean) Lagrangian formulation of the transfer matrix~\cite{Menotti:1981ry,Lamm:2019bik} instead reads
{\small
\begin{align}\label{eq:appA-Tk_L}
    \hat{T}_k^{(L)}&\equiv \sum_{g^\prime,g} e^{\beta \Re\Tr[\rho_f(g^\prime g^\dagger)]} \ketbra{g^\prime}{g} 
    = \sum_{h} e^{\frac{\beta}{2} (\chi_f(h)+\chi_f(h^{-1}))}\hat{R}_{[h]},
\end{align}
}
where $\rho_f$ is a fundamental representation and $\chi_f$ is the corresponding character and $\beta = \frac{1}{g^2}$.
Matching the two expressions~\eqref{eq:appA-Tk_L} and~\eqref{eq:appA-Tk_GT} can be done by noticing the same structure:
\begin{align}\label{eq:matchingGT_L}
   \frac{1}{|G|} \sum_j d_j e^{-\alpha f_j} \chi_j(h) \longleftrightarrow e^{\frac{\beta}{2} (\chi_f(h)+\chi_f(h^{-1}))}\qquad \forall{h\in G},
\end{align}
which can be inverted in terms of $\alpha$ and the coefficients $f_j$ 
using character orthogonality, such that
\mbox{($\frac{1}{|G|}\sum_g \chi_j(g) \chi_{j^\prime}(g) = \delta_{j, j^\prime}$)}
\begin{align}\label{eq:matchingGT_L_inverse}
   e^{-\alpha f_j} = \frac{1}{d_j} \sum_{g\in G} e^{\frac{\beta}{2} (\chi_f(g)+\chi_f(g^{-1}))} \chi_j(g) \qquad \forall j.
\end{align}
In the case of continuous groups, Eq.~\eqref{eq:matchingGT_L_inverse}
can be further manipulated with a saddle point expansion for $\beta \to \infty$
(see for example Ref.~\cite{Creutz:1976ch}),
resulting in a match of the type $\alpha f_j \mathbb{P}_j \sim g^2 \sum_a {(\hat{E}_j^{a})}^2$,
with $\hat{E}_j^{a}$ being the components of the electric field operators associated to the $j$-th irrep. For finite groups, an expansion in terms of $g^2$ is not available, as argued in the following Section, but one can still determine the dominant term in the strong coupling regime.

\subsection{Non-Abelian finite group: $D_4$}\label{subapp:transfmatD4}
For $G=D_4$, there are $5$ conjugacy classes $C_0=\{e\}$, $C_1=\{r,r^3\}$, $C_2=\{r^2\}$, $C_3=\{s,sr^2\}$, $C_4=\{sr,sr^3\}$, and $5$ irreps with characters as shown in Table~\ref{tab:D4irreps}.
\begin{table}[h]
    \centering
    \begin{tabular}{c|c|ccccc}
       $j$  & $d_j$ & $\chi_j \{e\}$ & $\chi_j \{r,r^3\}$ & $\chi_j \{r^2\}$ & $\chi_j \{s,sr^2\}$ & $\chi_j \{sr,sr^3\}$ \\
       \hline
       $0$ & 1 & 1& 1& 1& 1& 1 \\
       $1$ & 1 & 1& 1& 1& -1& -1 \\
       $2$ & 1 & 1& -1& 1& 1& -1 \\
       $3$ & 1 & 1& -1& 1& -1& 1 \\
       $4$ & 2 & 2& 0& -2& 0& 0\\
    \end{tabular}
    \caption{Characters of irreducible representations $\rho_j$ of $D_4$ group}
    \label{tab:D4irreps}
\end{table}
The fundamental representation is $j=4$, the only one with dimension 2, and can 
be expressed as a real representation $\rho_f(g)=\rho_f(s^{x_2} r^{2x_1+x_0}) = {(\sigma^x)}^{x_2} {(i\sigma^y)}^{2 x_1 + x_0}$ 
parameterized by a triple of binary digits $(x_2,x_1,x_0)\in \mathbb{Z}_2^3$,
while the other $1$-dimensional irreps all coincide with their characters.
The Casimir eigenvalues $f_j$ can be computed using the matching conditions in Eq.~\eqref{eq:matchingGT_L_inverse}, which results in
\begin{align}
    f_0 &= -\frac{1}{\alpha}\log (6+2\cosh(2\beta)),\\
    f_1=f_2=f_3 &= -\frac{1}{\alpha}\log (4\sinh^2(\beta)),\\
    f_4 &= -\frac{1}{\alpha}\log(2 \sinh(2\beta)).
\end{align}
Notice that the irrep labeled with $j=4$ corresponds to the fundamental representation (being $d_{j\leq 3}=1$), and contributes to the projector in Eq.~\eqref{eq:irrepProj} with $d_4^2=4$ states (one for each matrix element).
At this point, one can proceed in conventionally fixing the Casimir eigenvalue of the zero-mode 
(i.e., the trivial irrep $\rho_0$) to zero,
which means that shifting all the coefficients 
(in other terms, isolating the overall scalar prefactor of the transfer matrix) results in
\begin{align}\label{eq:D4fcoeffs_asympt}
    \tilde{f}_0 = 0,\quad \tilde{f}_1=\tilde{f}_2=\tilde{f}_3 \sim \frac{8 e^{-2\beta}}{\alpha},\quad \tilde{f}_4 \sim \frac{6 e^{-2\beta}}{\alpha}.
\end{align}
In the case of continuous groups, 
one can match in the usual saddle point expansion for small $\frac{1}{\beta}=g^2$
(see Ref.~\cite{Creutz:1976ch}) but, 
for finite groups, this is not available. 
Instead, one can express the Hamiltonian in terms of $e^{-2\beta}$ for large $\beta$ (or equivalently, small $g$).
Notice that, with the identification $\alpha=e^{-2\beta}$, one recovers the same Casimir coefficients 
found in Ref.~\cite{Mariani:2023eix}, using the elements in the conjugacy classes $\Gamma=C_1 \cup C_3 \cup C_4 = \{r,r^3,s,sr,sr^2,sr^3\}$ as generators,
which yields $f_{j=0}=0$, $f_{j=1,2,3}=8$, and $f_{j=4}=6$.
Therefore, following Eq.~\eqref{eq:D4fcoeffs_asympt}, 
a possible group-theoretical Hamiltonian expression for $D_4$,
compatible with the transfer matrix integrated with a fixed finite time as the one used in
Ref.~\cite{Lamm:2019bik}, is the following
{\footnotesize
\begin{align}\nonumber
    \hat{H}^{(GT,D_4)} &= \gamma e^{-2\beta}\bigoplus_{l\in E} \Big[ 8 \sum\limits_{j=1}^{3}{\Big(\ketbra{\widetilde{j}}{\widetilde{j}}\Big)}_l+6\sum_{\alpha, \beta} {\Big(\ketbra{\widetilde{4 \alpha \beta}}{\widetilde{4 \alpha \beta}}\Big)}_l\Big]\\
    &-\beta \bigoplus_{p\in Plaq.} \Re\Tr  \square_p,
\end{align}
}
where states corresponding to 1-dimensional irreps are denoted by a single number $\ket{\widetilde{j}}$ instead of a triple $\ket{\widetilde{j,\alpha,\beta}}$.
In cases when the model is expected to represent continuum physics, or even to build effective theories, the anisotropy coefficient $\gamma$ still needs to be tuned in order to preserve the lines of constant physics.
For completeness, we also report the non-vanishing Clebsch-Gordan coefficients for $D_4$, 
shown in Table~\ref{tab:CG_D4}.
\begin{table}[h]
    \centering
    \begin{tabular}{c|c|c|c}
       $(J,A,B)$  & $(j^\prime, \alpha^\prime, \beta^\prime)$ & $(j^{\prime\prime}, \alpha^{\prime\prime}, \beta^{\prime\prime})$ & $C^{(J,A,B)}_{(j^\prime,\alpha^\prime,\beta^\prime);(j^\prime,\alpha^\prime,\beta^\prime)}$ \\
       \hline
       (0) & (0) & (0) & 1 \\
       (0) & (1) & (1) & 1 \\
       (0) & (2) & (2) & 1 \\
       (0) & (3) & (3) & 1 \\
       (0) & (4,0,0) & (4,0,0) & 1/2 \\
       (0) & (4,0,1) & (4,0,1) & 1/2 \\
       (0) & (4,1,0) & (4,1,0) & 1/2 \\
       (0) & (4,1,1) & (4,1,1) & 1/2 \\
       (1) & (0) & (1) & 1 \\
       (1) & (2) & (3) & 1 \\
       (1) & (4,0,0) & (4,1,1) & 1/2 \\
       (1) & (4,0,1) & (4,1,0) & -1/2 \\
       (1) & (4,1,0) & (4,0,1) & -1/2 \\
       (1) & (4,1,1) & (4,0,0) & 1/2 \\
       (2) & (0) & (2) & 1 \\
       (2) & (1) & (3) & 1 \\
       (2) & (4,0,0) & (4,1,1) & 1/2 \\
       (2) & (4,0,1) & (4,1,0) & 1/2 \\
       (2) & (4,1,0) & (4,0,1) & 1/2 \\
       (2) & (4,1,1) & (4,0,0) & 1/2 \\
       (3) & (0) & (3) & 1 \\
       (3) & (1) & (2) & 1 \\
       (3) & (4,0,0) & (4,0,0) & 1/2 \\
       (3) & (4,0,1) & (4,0,1) & -1/2 \\
       (3) & (4,1,0) & (4,1,0) & -1/2 \\
       (3) & (4,1,1) & (4,1,1) & 1/2 \\
       (4,0,0) & (0) & (4,0,0) & 1 \\
       (4,0,1) & (0) & (4,0,1) & 1 \\
       (4,1,0) & (0) & (4,1,0) & 1 \\
       (4,1,1) & (0) & (4,1,1) & 1 \\
       (4,0,0) & (1) & (4,1,1) & 1 \\
       (4,0,1) & (1) & (4,1,0) & -1 \\
       (4,1,0) & (1) & (4,0,1) & -1 \\
       (4,1,1) & (1) & (4,0,0) & 1 \\
       (4,0,0) & (2) & (4,1,1) & 1 \\
       (4,0,1) & (2) & (4,1,0) & 1 \\
       (4,1,0) & (2) & (4,0,1) & 1 \\
       (4,1,1) & (2) & (4,0,0) & 1 \\
       (4,0,0) & (3) & (4,0,0) & 1 \\
       (4,0,1) & (3) & (4,0,1) & -1 \\
       (4,1,0) & (3) & (4,1,0) & -1 \\
       (4,1,1) & (3) & (4,1,1) & 1 \\
    \end{tabular}
    \caption{Clebsh-Gordan coefficients for $D_4$ group.}
    \label{tab:CG_D4}
\end{table}
Furthermore, the elements of the individual link variables in the irrep basis can
be explicitly written as
\begin{align}\label{eq:Uirreps}
    \hat{U}_{0,0} &= \frac{1}{\sqrt{2}}\Big[ \Big(\ket{\widetilde{0}}+\ket{\widetilde{3}}\Big)\bra{\widetilde{4,0,0}} + \Big(\ket{\widetilde{0}}+\ket{\widetilde{2}}\Big)\bra{\widetilde{4,1,1}}\Big]\\
    \hat{U}_{0,1} &= \frac{1}{\sqrt{2}}\Big[ \Big(\ket{\widetilde{0}}-\ket{\widetilde{3}}\Big)\bra{\widetilde{4,0,1}} + \Big(\ket{\widetilde{2}}-\ket{\widetilde{1}}\Big)\bra{\widetilde{4,1,0}}\Big]\\
    \hat{U}_{1,0} &= \frac{1}{\sqrt{2}}\Big[\Big(\ket{\widetilde{2}}-\ket{\widetilde{1}}\Big)\bra{\widetilde{4,0,1}} + \Big(\ket{\widetilde{0}}-\ket{\widetilde{1}}\Big)\bra{\widetilde{4,1,0}}\Big]\\
    \hat{U}_{1,1} &= \frac{1}{\sqrt{2}}\Big[ \Big(\ket{\widetilde{1}}+\ket{\widetilde{2}}\Big)\bra{\widetilde{4,0,0}} + \Big(\ket{\widetilde{0}}+\ket{\widetilde{3}}\Big)\bra{\widetilde{4,1,1}}\Big].
\end{align}
In general, Eq.~\eqref{eq:Uirreps} can used in combination with the Clebsch-Gordan coefficients to obtain explicit expressions for the plaquette terms
of the Hamiltonian in irrep basis.

\section{Density condition for the free products of random elements of a perfect group}\label{app:twogens}
The main goal of this Section is to show that the free product of two random elements of a perfect group is dense in the same group with probability 1.
For our purposes, we state the result in the case of the $SU(N)$ group, 
which maps to a sequence of quantum unitary gates whose concatenation (circuit) acts transitively on a $N$-dimensional subspace (i.e., the physical Hilbert space of the gauge theory).

\begin{theorem}\label{thm:twoDenseSUN}
    Let us consider two elements $R_1,R_2$ drawn 
    randomly and independently 
    from $G=SU(N)$ in such a way that the 
    probabilities are non-zero on the whole group (i.e.~the 
    distributions involved have compact support everywhere).
    Then the set generated by the free product
    of $R_1$ and $R_2$, 
    i.e. $\{{\prod_{i=1}^{k} (R_1^{p_i} 
    R_2^{q_i})}|p_i,q_i\in\mathbb{Z}, k\in \mathbb{N}\}$, 
    is dense in $SU(N)$.
\end{theorem}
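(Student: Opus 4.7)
The plan is to recast the statement as a measure-theoretic claim: the set of pairs $(R_1,R_2)\in SU(N)\times SU(N)$ for which $\overline{\langle R_1,R_2\rangle}\neq SU(N)$ is a Haar-null set. Since the sampling distribution on $SU(N)$ has full support (and, implicit in the ``non-zero probability everywhere'' wording, is absolutely continuous with respect to Haar measure), drawing a non-dense-generating pair will then have probability zero.

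First I would invoke Cartan's closed subgroup theorem: since $SU(N)$ is compact, $H\equiv\overline{\langle R_1,R_2\rangle}$ is always a closed Lie subgroup. If $H\neq SU(N)$ then $\dim H<N^2-1$, so $H$ has strictly positive codimension and hence Haar measure zero in $SU(N)$. Next, for a Zariski-generic $R_1$ (i.e., away from the codimension-one subvariety of non-regular unitary matrices), the cyclic closure $\overline{\langle R_1\rangle}$ is a maximal torus $T\subset SU(N)$, because the powers of a regular element equidistribute in the centralizer torus. This forces $T\subseteq H$ and, by conjugation with $R_2$, also $R_2TR_2^{-1}\subseteq H$; therefore the Lie algebra $\mathfrak{h}$ of $H$ contains both $\mathfrak{t}$ and $\mathrm{Ad}(R_2)\mathfrak{t}$, and so the Lie subalgebra they generate.

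The heart of the argument, and the step I expect to be the main obstacle, is showing that for Zariski-generic $R_2$ the two subspaces $\mathfrak{t}$ and $\mathrm{Ad}(R_2)\mathfrak{t}$ generate all of $\mathfrak{su}(N)$ as a Lie algebra. Using the root-space decomposition $\mathfrak{su}(N)_{\mathbb{C}} = \mathfrak{t}_{\mathbb{C}} \oplus \bigoplus_{\alpha} \mathfrak{g}_\alpha$, the projection of $\mathrm{Ad}(R_2)\mathfrak{t}$ onto each root space $\mathfrak{g}_\alpha$ is a polynomial function of the matrix entries of $R_2$; exhibiting a single $R_2$ (for example, a suitably chosen permutation-like unitary representing a Weyl-group element) for which every such projection is simultaneously non-zero shows that the ``bad'' locus of $R_2$ is a proper Zariski-closed subvariety of $SU(N)$. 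Once $\mathfrak{h}$ contains $\mathfrak{t}$ together with a non-zero element of each $\mathfrak{g}_\alpha$, brackets with $\mathfrak{t}$ (which acts on $\mathfrak{g}_\alpha$ by the non-zero eigenvalue $\alpha$) combined with standard root-chain arguments in type $A_{N-1}$ recover every root space, yielding $\mathfrak{h}=\mathfrak{su}(N)$ and hence $H=SU(N)$, a contradiction.

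Finally I would assemble the pieces: the set of bad pairs is contained in the union of the non-regular locus in the $R_1$-factor and the proper Zariski-closed subset of bad $R_2$'s in the $R_2$-factor, which is itself a proper Zariski-closed subvariety of $SU(N)\times SU(N)$ and therefore Haar-null. By the absolute continuity assumption on the distributions, this event has probability zero, establishing that $\{\prod_{i=1}^{k}R_1^{p_i}R_2^{q_i}\}$ is dense in $SU(N)$ with probability $1$.
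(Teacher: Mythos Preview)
Your overall strategy is sound and differs somewhat from the paper's. Both proofs ultimately rest on the root-space decomposition of $\mathfrak{su}(N)$, but the paper proceeds via Kuranishi's construction: it diagonalizes $R_1$ to obtain a Cartan element $h=\sum_i\lambda_i h_i$, writes the generator of $R_2$ (in the same frame) as $a=\tilde h+\sum_\alpha c_\alpha e_\alpha$, and then separates the individual root vectors using the iterated brackets $s_k=[h,\dots,[h,a]\dots]=\sum_\alpha(\vec\alpha\!\cdot\!\vec\lambda)^k e_\alpha$, which is a Vandermonde argument requiring $(\vec\alpha-\vec\beta)\!\cdot\!\vec\lambda\neq 0$ for all root pairs. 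Your route instead passes through the closed-subgroup theorem and exploits that any Lie subalgebra $\mathfrak{h}\supset\mathfrak{t}$ is automatically $\mathrm{ad}(\mathfrak{t})$-stable and hence a direct sum of $\mathfrak{t}$ with some collection of full root spaces; this makes the ``recover every root space'' step immediate once each projection of $\mathrm{Ad}(R_2)\mathfrak{t}$ is nonzero, without needing the iterated-commutator or root-chain machinery. Your packaging of the bad set as a proper Zariski-closed subvariety is also cleaner than the paper's more informal ``with probability~1'' assertions.

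Two slips to fix. First, ``regular'' is not enough for $\overline{\langle R_1\rangle}$ to be the full maximal torus: a regular element with eigenvalue phases that are rational multiples of~$\pi$ has finite order, so its closure is finite. What you need is that the eigenphases of $R_1$, together with $2\pi$, are linearly independent over~$\mathbb{Q}$; this is a Haar-full-measure condition but \emph{not} a Zariski-open one, so your final union-of-bad-sets is merely Haar-null rather than Zariski-closed (which is all you need). Second, your proposed witness for the $R_2$ step is precisely wrong: any lift of a Weyl-group element \emph{normalizes} the torus, so $\mathrm{Ad}(R_2)\mathfrak{t}=\mathfrak{t}$ and every root-space projection vanishes. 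A correct witness is any unitary with no zero entries (e.g.\ the discrete Fourier matrix), since for $i\neq j$ the projection onto $\mathfrak{g}_{\alpha_{ij}}$ vanishes identically on $\mathfrak t$ only when rows $i$ and $j$ of $R_2$ have disjoint support.
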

\begin{proof}
In Theorem 6 of Ref.~\cite{Kuranishi:1951},
it is proved that for a semi-simple Lie algebra 
$\mathscr{L}$ there exists two elements $a,b\in \mathscr{L}$
generating the whole algebra. 
This connects to Theorems 7 and 8 of Ref.~\cite{Kuranishi:1951},
which states that if $G$ is a connected and 
perfect\footnote{
A perfect group is a group $G$ 
for which the commutator subgroup coincides with the group
itself, i.e.~$[G,G]=\{ghg^{-1}h^{-1}| g,h\in G\}=G$.} 
Lie group with lie algebra $\mathscr{L}$, 
and $\mathscr{L}$ is generated by two elements, 
therefore, also $G$ is generated by two elements, which
can be taken in an arbitrarily small neighborhood 
of the identity, and the subgroup generated by these 
is everywhere dense in $G$. 
Since $SU(N)$ is perfect and connected, 
it satisfies the assumptions of Theorems 6, 7 and 8 in Ref.~\cite{Kuranishi:1951}.
We are only left to prove that drawing any 
two random elements $R_1$ and $R_2$, 
they are finite generators of the whole group with probability $1$.
A sketch of the proof for this is 
inspired by the proof of Theorem 6 in 
Ref.~\cite{Kuranishi:1951}, which proceeds through 
an explicit construction of generators of the Lie 
algebra $\mathscr{L}$.
Let us consider a basis $\{h_i\}_{i=1}^{l}$ for the Cartan subalgebra $\mathscr{H}\subset \mathscr{L}$,
and elements $e_{\vec{\alpha}}$ associated 
to root vectors $\vec{\alpha}$ such that 
$\{h_i\}\cup \{e_{\vec{\alpha}}\}$ is a basis for $\mathscr{L}$ 
(see the pedagogical Ref.~\cite{georgi1982lie} for 
an introduction to Cartan subalgebras and root vectors).
Therefore, one can build two
Lie algebra generators $a\equiv \sum_{\vec{\alpha}} e_{\vec{\alpha}}$ 
and $h\equiv \sum_i \lambda_i h_i$ 
whose dynamical algebra built from successive
commutators
\begin{equation}
\begin{aligned}
    s_1&=[h,a]=\sum_{\vec{\alpha}} (\vec{\alpha} \cdot{\vec{\lambda}}) e_{\vec{\alpha}},\\
    s_2&=[h,[h,a]]=\sum_{\vec{\alpha}} {(\vec{\alpha} \cdot{\vec{\lambda}})}^2 e_{\vec{\alpha}},\\
    &\qquad\qquad\qquad \vdots\\
    s_k&=\underbrace{[h,\dots,[h}_{k\text{ times}},a]\dots]=\sum_{\vec{\alpha}} {(\vec{\alpha} \cdot{\vec{\lambda}})}^k e_{\vec{\alpha}},
\end{aligned}
\end{equation}
generates the whole Lie algebra 
(see Ref.~\cite{Kuranishi:1951} for details), 
provided the values $\lambda_i$
are chosen such that 
$(\vec{\alpha}-\vec{\beta})\cdot \vec{\lambda}\neq 0$ 
for every root $\vec{\alpha}$ and $\vec{\beta}$.
If the values $\lambda_i$ are drawn randomly and
independently, this requirement is satisfied 
with probability 1 and we are done.
Let us consider the two random special unitaries $R_1$
and $R_2$. Since $R_1$ is normal, it can be diagonalized
by some unitary $T$, as $\tilde{Z}\equiv T^\dagger R_1 T$,
which is generated by some element $h$ of the 
Cartan subalgebra $\mathscr{H}$. 
Performing the same transformation on the other
generator $\tilde{X}\equiv T^\dagger R_2 T$
yields a non-diagonal operator with probability 1;
this is generated by a sum of some element of the Cartan
subalgebra $\tilde{h}$, in irrational relation with
$h$ with probability 1, and with a linear combination
of all root elements, again with probability 1.
A free product between $R_1$ and $R_2$
is then equivalent to a free product of 
$\tilde{Z}$ and $\tilde{X}$ followed by a conjugation
by $T$. Since $\tilde{Z}$ and $\tilde{X}$ satisfy
the requirements of the Theorems above, and the group
is perfect, also $R_1$ and $R_2$ can be used as 
free group generators of the whole group.
\end{proof}

\section{Overview of Gaussian Kernel Distribution Estimation}\label{app:GKDE}
Let us consider a dataset $\mathcal{D}=\{x_i\}_{i=1}^N$ 
of independent variables extracted with a probability distribution $p^{(e)}(x)$,
and let us consider a Gaussian kernel
\begin{align}
   G_\sigma(x,y) \equiv \frac{1}{\sqrt{2\pi} \sigma} e^{-\frac{{(x-y)}^2}{2 \sigma^2}}.
\end{align}
We define the kernel estimate at a point $y$ as follows: 
\begin{align}\label{eq:KDEestimator}
    p_\sigma(y;\mathcal{D}) = \frac{1}{N} \sum_{i=1}^N G_\sigma(y,x_i).
\end{align}
As for standard histograms with fixed bin size, 
the function in Eq.~\eqref{eq:KDEestimator} can be considered as a $\sigma$-coarse-grained estimator for the exact probability distribution, 
using a Gaussian kernel, i.e.:
\begin{align}
    p_\sigma^{(e)}(y)&\equiv \expval{p_\sigma(y;\mathcal{D})}_{\mathcal{D}}\\ 
    &= \frac{1}{N} \sum_{i=1}^{N} \int\! dx_i \; p^{(e)}(x) G_\sigma(y,x) \underset{\substack{N\to \infty \\ \sigma \to 0^+}}{\to} p^{(e)}(y),
\end{align}
where ${\langle f(\{x_i\}) \rangle}_{\mathcal{D}}=\int \prod_i [d x_i p^{(e)}(x_i)] f(\{x_i\})$ is the expectation value with respect to all datasets $\mathcal{D}$ (with fixed number of elements $N$ implied). 
In particular, for Gaussian kernels, the smeared distribution $p_\sigma^{(e)}(y)$ is connected to the exact probability distribution as series in powers of $\sigma^2$ through a saddle point expansion~\cite{Daniels:1954}:
\begin{align}
    {\langle p_{\sigma}(y;\mathcal{D})\rangle}_{\mathcal{D}} = \sum\limits_{r=0}^{\infty} {\Big(\frac{\sigma^2}{4}\partial_y^2\Big)}^{r} p^{(e)}(y)\underset{\sigma\to 0^+}{\longrightarrow} p^{(e)}(y).
\end{align}

In order to compute the associated error we can compute the variance of the kernel estimate:
\begin{align}
    \sigma^2_{p_\sigma(y;\mathcal{D})} &=  \expval{{\Big(\frac{1}{N}\sum_{i=1}^{N} G_\sigma(y,x_i)\Big)}^2}_{\mathcal{D}}-{\big(p_\sigma^{(e)}(y)\big)}^2\\
    &=\frac{1}{N} \Big[ \frac{p_{\sigma/\sqrt{2};\mathcal{D}}^{(e)}(y)}{2\sqrt{\pi}\sigma} - {\big(p_\sigma^{(e)}(y)\big)}^2\Big].
\end{align}
Therefore, for independent data, the (unbiased) error to be associated with each bin bar $j$:
\begin{align}
    {\Delta p_\sigma(y;\mathcal{D})}&\simeq \sqrt{\frac{1}{N-1} \Big[ \frac{p_{\sigma/\sqrt{2}}(y;\mathcal{D})}{2\sqrt{\pi}\sigma} - {\big(p_\sigma(y;\mathcal{D})\big)}^2\Big]}.
\end{align}
However, when some autocorrelation time is present in the data, in practice one 
it is useful to perform a blocked partition of the dataset $\mathcal{D}$, 
followed by a certain number $K$ of resamples $\mathcal{D}_s$ 
(i.e., jackknife or bootstrap) such that the statistical error is estimated as 
\begin{align}\label{eq:deltaRES}
{{\lbrack\Delta p_\sigma(y;\mathcal{D})}\rbrack}_{\text{resamp.}}=\sqrt{\frac{1}{K} \sum_{s=1}^{K} {[p^s_\sigma(y;\mathcal{D}_s)-p_\sigma(y;\mathcal{D}_s)]}^2}.
\end{align}
In general, the smoothing parameter $\sigma$ (also called bandwidth in Statistics literature) should be chosen to satisfy an optimality criterion, such as the minimization of the expected mean integrated squared error (MISE)
\begin{align}
    \text{MISE}(\sigma)&={\Big\langle\int dy {\big[p_\sigma(y;\mathcal{D}) - p^{(e)}(y)\big]}^2\Big\rangle}_{\mathcal{D}}.
\end{align}
A $\sigma$ too large results in higher bias from the exact distribution $p^{(e)}(y)$ (undersampling), while a $\sigma$ too small is also not recommended, since it results in higher variance among results from different datasets (oversampling). Several techniques can be used to estimate and minimize this quantity from a dataset $\mathcal{D}$ (see Refs.~\cite{Bowman:1984,Sheather:1991}),
yielding a general bound for $\sigma$ with finite statistics $N$ of the type $\sigma \gtrsim C N^{-\frac{1}{5}}$, 
for some coefficient $C$ which has to be estimated from data.
However, for the purposes of this paper (data from QMS energy measurements is constrained to lie on the QPE grid points), 
it is sufficient to ensure that the bandwidth is of the order of the grid spacing. 
In our case, we set the smoothing parameter to be always of the order of the grid spacing for each number of qubit $q_e$ for the energy register, 
therefore scaling exponentially as $\sigma_{\text{KDE}}=\frac{\Delta E^{(\text{grid})}}{2^{q_e}-1}$, while our statistics is sufficient to satisfy the bound.

\bibliographystyle{apsrev4-1}
\bibliography{refs}

\end{document}